\documentclass[12pt]{amsart}
\usepackage{amsmath,amsthm,amscd,amsfonts,amssymb}
\usepackage[all]{xy}
\usepackage[ansinew]{inputenc} 
\usepackage{datetime}

\setlength{\headheight}{32pt} \setlength{\headsep}{29pt}
\setlength{\footskip}{28pt} \setlength{\textwidth}{444pt}
\setlength{\textheight}{636pt} \setlength{\marginparsep}{7pt}
\setlength{\marginparpush}{7pt} \setlength{\oddsidemargin}{4.5pt}
\setlength{\marginparwidth}{55pt}
\setlength{\evensidemargin}{4.5pt} \setlength{\topmargin}{-15pt}
\setlength{\footnotesep}{8.4pt}


\newtheorem{thm}{Theorem}[section]

\newtheorem{prop}[thm]{Proposition}
\theoremstyle{definition}
\newtheorem{defi}[thm]{Definition}
\newtheorem{obs}[thm]{Remark}

\newtheorem{lemma}[thm]{Lemma}

\newtheorem*{acknowledgement*}{Acknowledgement}

\numberwithin{equation}{section}

\setcounter{tocdepth}{1}

\newcommand{\R}{{\mathbb R}}
\newcommand{\la}{\langle}
\newcommand{\ra}{\rangle}
\newcommand{\cC}{{\mathcal C}}
\renewcommand{\a}{\alpha}
\def\bemol{{\boldsymbol{\flat}}}
\def\C{\mathcal {C}}

\def\iiota{{\text{\larger[1]{\emph\i}}}}


\begin{document}

\title[Classical and relativistic fluids as intermediate integrals]
 {Classical and relativistic fluids as intermediate integrals of finite dimensional mechanical systems}

\author{R. J.  Alonso-Blanco}

\address{Departamento de Matem\'{a}ticas, Universidad de Salamanca, Plaza de la Merced 1-4, E-37008 Salamanca,  Spain.}
\email{ricardo@usal.es}

\begin{abstract}
We explore the relationship between mechanical systems describing the motion of a particle with the mechanical systems describing a continuous medium. More specifically, we will study how the so-called intermediate integrals or fields of solutions of a finite dimensional mechanical system (a second order differential  equation) are simultaneously Euler's equations of fluids and conversely. This will be done both in the classical and relativistic context. A direct relationship will be established by means of  the so-called time constraint (classical unsteady case, static or not) and the relativistic correction (for arbitrary pseudo-Riemannian metrics).
\end{abstract}

\maketitle


\tableofcontents

\bigskip

\section{Introduction}

The initial idea of this work is the following simple observation: Euler's equation for a stationary fluid, say $(v\cdot\nabla)\,v=\nabla\,P/\rho$, is formally the same as the Newton's equation of a single particle subjected to a force $\nabla\,P/\rho$. The fundamental difference between the two is interpretation: In the first case, $v$ is a vector field defined through space and in the second one it is the velocity vector along a particle path. The first one is the equation for a certain continuum medium, while the second one is the equation for a single material point particle.

One way of understanding the relationship just mentioned is as follows: when a mechanical system defines the evolution of a material point particle, all possible \emph{virtual} paths that the particle could make are simultaneously being taken into account
; it turns out that the consideration of (compatible) beams of these virtual particles describes the evolution of true material fluids. It should be stressed that it is not about the consideration of systems made up of a large number of particles, followed by some process of taking limits but the \emph{physical realization of virtual motions of a single particle!} As we will see, this will be even more significant when considering the relativistic case.
\smallskip

What we will do is explore formal relationship, trying to see how far it can go in more general contexts.
\smallskip

The mathematical concept that gathers the previous scheme is that of \emph{intermediate integral} of a differential equation, and is well known and classical (see, for example \cite{Goursat}, where Goursat develops Monge's method for integration of certain second order partial differential equations). We now particularize it as follows: an intermediate integral of a mechanical system (a second order differential equation) on $M$, is, by definition, a tangent vector field (a first order differential equation) on  $M$ whose integral curves (solutions) are also solutions of that mechanical system. As a particular but very important instance, the Hamilton-Jacobi equation is the equation of the intermediate integrals which give at the same time Lagrangian submanifolds of $T^*M$ (see \cite{RM}).
An akin concept is the classical concept of  \emph{field of extremals} in the calculus of variations (see, for instance, \cite{Gelfand}, p. 131 ff). Another well known example is the following: given a metric or, more generally, a connection, let us consider the equation of its geodesic trajectories, which is a second order differential equation. Then, each tangent vector field whose integral curves are geodesic (a \emph{``geodesic vector field''}), constitutes an intermediate integral of the equation of the geodesics.
\medskip

 Since the concept of intermediate integral  produces  a unification in various geometric formalisms of Mechanics  we believe it deserves further consideration. Just as an example of other possible scenarios where such a concept could shed some light, we bring a passage from De Broglie where,  with a view to introducing Wave Mechanics, it seems clear that he is dealing with an intermediate integral:
``In order to examine this in detail let us consider not a single particle but a cloud of identical particles, all situated in the same field of force and without mutual reactions. The motion of this cloud, taken altogether, represents a whole assembly of possible motions of the same particle in the given field''(\cite{Broglie}, p. 27).
\medskip

The main objective in this work is to demonstrate that the basic equations of fluid mechanics (exception made of conservation laws that must be added independently) are obtainable as equations of intermediate integrals of ordinary finite mechanical systems.
\medskip

The results of our research are as follows:\medskip

Classical (non-relativistic) case: there is an identity between the equations of the intermediate integrals of the mechanical system produced by an \emph{integrable 1-form of force} and the Euler equations for steady fluids (in a configuration space equipped with an arbitrary pseudo-Riemannian metric). Additional forces may be included at convenience, depending on the model to be described. To obtain Euler equations in the unsteady case it is enough to start from the same type of mechanical systems (integrable forces), modify them by means of a time constraint (see below) and then take the equations of their intermediate integrals. In the subcase where the spatial metric (see below) is static, we arrive to Euler (and Bernouilli) equations for unsteady fluids including the classical versions in Euclidean space. When we consider a space-time metric of type Friedman-Lemaître-Robertson-Walker (without even needing to comply with Einstein's equations), we obtain non-relativistic Euler (and Bernoulli) equations within these background spaces, in this way recovering some of the equations found in the literature for specific FLRW models (see below).
\smallskip

Relativistic case: Euler relativistic equations for a perfect fluid are canonically obtained from a mechanical system defined by an integrable form of force, modifying said system through relativistic correction (see below) and then considering the equations of the intermediate integrals of the new mechanical system (now already being relativistic). The procedure just described is intrinsic and, therefore, besides of its simplicity, it has the advantage of being absolutely general; therefore, it is valid for configuration spaces equipped with an arbitrary pseudo-Riemannian metric. Anyway, it is much more important to point out the following: it is not possible to understand a relativistic fluid as an aggregate of relativistic particles (and then a limit process). The reason is that we cannot even formulate mechanical systems defined by several relativistic particles (see the preprint \cite{Objection} joint Muñoz-Díaz, where this question is rigorously stated and proved). However, quite surprisingly, from the mechanical system describing a single relativistic particle, by taking each of its intermediate integrals, we get the relativistic fluid model.
\smallskip

The aforementioned time constraint (introduced by Muñoz-Díaz in \cite{MecanicaMunoz}) and the relativistic correction (introduced in \cite{RM}) are univocally and intrinsically defined processes that will be explained in the article.
\bigskip

The structure of the paper is as follows. In Section \ref{s:mecanica}, we explain the classical mechanics approach that will be used throughout the work. The essential feature is to be founded on basic structures from the configuration space $M$: 1) the concept of second order differential equation, which is an object of the tangent bundle of $M$, 2) the symplectic structure of the cotangent bundle of $M$ and 3) the isomorphism that a pseudo-Riemannian metric defines between the tangent and cotangent bundles of M.
Section \ref{s:relatividad} is dedicated to relativistic systems: the solutions of a mechanical system are, by definition, parameterized curves; we will understand by relativistic systems those whose solutions are parameterized precisely by the arc length associated with the metric (Muñoz-Díaz introduced this  definition in \cite{RelatividadMunoz}; see also \cite{RM}).
Section \ref{s:ligaduras} deals with the aforementioned constraints. The first is the \emph{time constraint}: there is no function on the tangent fiber that can play the role of time; however, it is always possible to make a constraint that forces a mechanical system, by means of a canonical modification, so that a given function can play that role. The second one is the \emph{relativistic correction}: given a mechanical system, it is possible to canonically modify the form of force in the direction of its trajectories so that they are parameterized by the arc length (\emph{proper time}), thus becoming relativistic. Intermediate integrals are introduced in Section \ref{s:intermedias}: some basic properties, examples and a criterion characterizing them will be given.
In Section \ref{s:fluidoclasico} it is shown that Euler's equations are identical to those of the intermediate integrals of a system consisting of a single point particle under the influence of an integrable force. The most interesting case is that of unsteady fluids that are reached after a time constraint; in that situation, Euler equation comes from taking the ``spatial part'' of the equation of the intermediate integrals, while the ``time component'' gives the Bernoulli equation, in its unsteady version. Two cases are developed in detail: 1) when the spatial metric is static and 2) when the spatial metric varies with time giving rise to a FLRW type global metric. The relativistic case is given in Section \ref{s:fluidorelativista}: It is shown that Euler's relativistic equations for a perfect fluid are obtained as the equations of the intermediate integrals of a mechanical system given by a point particle under the action of an integrable force once it is corrected relativistically. Lastly, Section \ref{s:conclusiones} just contains some final remarks and conclusions.
\bigskip

\section{Mechanical systems}\label{s:mecanica}
We briefly present in this section the bases of the Mechanics that will be used in the following sections. It is an approach given in \cite{MecanicaMunoz} and subsequently developed and applied in several directions \cite{RM,RAlonso,Flux,RelatividadMunoz,Tiempo}. The main idea is to present the theory as based on the most essential and geometric facts of the second order equations. It will also help us to fix the notation. In Section \ref{s:relatividad}, we give the definition of relativistic mechanical systems: those whose trajectories are parameterized by the length element; Surprisingly enough, it will turn out that such property does not depend on the metric.

Let $M$ be a smooth manifold of dimension $n$, and $TM$ be its
tangent bundle. From here on, we will denote by $x^1,\dots,x^n$ a collection of local coordinates in (some open set of) M.

Each differential $1$-form $\alpha$ on $M$ can be
considered as a function on $TM$, denoted by $\dot\alpha$, which
assigns to each $v_a\in T_a M$ the value
\begin{equation}\label{e:puntoforma}
\dot\alpha(v_a)=\langle\alpha_a, v_a\rangle
\end{equation} obtained by duality.
In particular, a function $f\in\cC^\infty(M)$ defines the function
on $TM$ associated to $df$ that we denote in short by $\dot f$.
This definition also applies to differential forms $\alpha$ on
$TM$ that are at each point the pull-back of a form on $M$. In the
sequel we call these forms {\it horizontal forms} (locally they look like $\alpha=\alpha_idx^i$, for certain functions on the tangent bundle $\alpha_i$).

From given coordinates $x^i$,  the collection $x^i,\dot x^i$ defines a set of local coordinates for the
corresponding open set of $TM$.   In these type of coordinates, if $\alpha=\alpha_i(x,\dot x)\, dx^i$ is a horizontal 1-form then
\begin{equation}\label{e:puntoforma2}
\dot\alpha=\alpha_i\,\dot x^i.
\end{equation}
In particular, for a function $f$,
\begin{equation}\label{e:puntofuncion}
\dot f=\dot x^i\frac{\partial f}{\partial x^i}.
\end{equation}

The map $f\mapsto \dot f$  from $\cC^\infty(M)$ to
$\cC^\infty(TM)$ is a derivation of the ring $\cC^\infty(M)$
taking values in the $\cC^\infty(M)$-module $\cC^\infty(TM)$. We
denote it by $\dot d$ since it is essentially the differential. In coordinates,
\begin{equation}\label{dpunto}
\dot d=\dot x^i\frac{\partial }{\partial x^i}.
\end{equation}
For each horizontal form $\a$, we have $\dot\a=\la\a, \dot d\ra$
as functions on $TM$ (pairing as usual).
\medskip

If we consider a parameterized smooth curve $\gamma\colon I\to M$, 
($I$ is a open real interval), we can consider the set of its velocities
$$v_{\gamma(t)}:=\gamma_*\left(\frac{d}{dt}\right)_t\in T_{\gamma(t)}M,\quad t\in I.$$
This define another parameterized curve, say $\dot\gamma\colon I\to TM$, which is called the \emph{lift} or \emph{prolongation} of $\gamma$ to $TM$. In local coordinates, if $\gamma$ is described by $x^i=x^i(t)$ then $\dot\gamma$ is
$$x^i=x^i(t),\quad \dot x^i=\frac{dx^i(t)}{dt},$$
so that the ``dot'' notation is consistent with its customary meaning.

\medskip

\subsection{Second order differential equations. Contact system}

In this context, we will deal with  a geometric version of the concept of system of explicit second order ordinary differential equations:

\begin{defi}
\label{def: second_order_equation}
A tangent vector field $D$ on $TM$ is a \emph{second order differential
equation} when its restriction (as derivation) to the subring
$\cC^\infty(M)$ of $\cC^\infty(TM)$ is $\dot d$. This is
equivalent to have $\pi_*(D_{v_a})=v_a$ at each point $v_a\in
T_aM$ (where $\pi\colon TM\to M$ denotes the canonical
projection and $\pi_*$ is its associated tangent map).
\end{defi}

In local coordinates $x^i,\dot x^i$, in $TM$,
a second order differential equation has the expression
\begin{equation*}
D=\dot x^i\frac{\partial}{\partial x^i}+f^i(x,\dot x)\frac{\partial}{\partial \dot x^i}.
\end{equation*}
Usually we will denote $f^i$ by $\ddot x^i$ understanding that it
is a given function of the $x$'s and $\dot x$'s.

The difference of two second order differential equations is a vector field $TM$ vertical with respect to the projection $TM\to M$,  so that annihilates all the functions of $M$. Locally a vertical vector field looks like
$$V=V^i\frac{\partial}{\partial \dot x^i}.$$
Conversely, the sum of a second order differential equation and a vertical vector field is a new second order differential equation. Therefore, we can say that the set of second order differential equations  has an affine structure modeled on the vertical vector fields.

\begin{defi}
\label{def: Pfaff_system}
The \emph{contact system} on $TM$ is the Pfaff system  which
consists of all the $1$-forms annihilating all the second order
differential equations. It will be denoted by $\Omega$.
\end{defi}

The forms in the contact system also annihilate the
differences between pairs of second order differential equations, so that annihilate all the vertical vector fields.
Therefore, they are horizontal forms; each
$\omega_{v_a}\in\Omega_{v_a}$ is the pull-back to $T^*_{v_a}TM$ of
a form in $T^*_aM$. Now, a horizontal $1$-form kills a second
order differential equation if and only if it kills the field
$\dot d$. Thus the contact system on $TM$ consists of the
horizontal $1$-forms which annihilate $\dot d$: in other words, a horizontal form $\alpha$ is contact if and only if $\dot\alpha=0$.
\medskip

A local system of generators for the contact system $\Omega$, out
of the zero section, is given by
\begin{equation*}
\dot x^idx^j-\dot x^jdx^i\quad (i,j=1,\dots,n)\ .
\end{equation*}

 Finally, the contact system has the following interpretation: the one-dimensional solutions of $\Omega$ are the liftings to $TM$ of parameterized curves in $M$ (provided that we consider solutions out of the zero section which are regularly projected to $M$).
\medskip

\subsection{Cotangent bundle. Liouville form}
Let $T^*M$ be the cotangent bundle of $M$ and $\pi\colon T^*M\to M$
the canonical projection. Each set of local coordinates $x^i$ on $\mathcal{U}\subseteq M$, induce local coordinates
$x^i,p_i$ on $\pi^{-1}(\mathcal{U})\subseteq T^*M$, as follows: let $\alpha_a\in T^*M$, then, by construction, $x^i(\alpha_a):=x^i(a)$ and $p_i(\alpha_a):=\alpha_a(\partial/\partial x^i)_a$.
\medskip

Recall that the \emph{Liouville form}
$\theta$ on $T^*M$ is defined by the rule
$$\theta_{\alpha_a}=\pi^*(\alpha_a)$$
for $\alpha_a\in T^*_aM$. Abusing the notation we can write
$\theta_{\alpha_a}=\alpha_a$.
\medskip

The $2$-form $\omega=d\theta$ is the natural symplectic form
associated to $T^* M$. In local coordinates  we have
\begin{equation*}
\theta=p_idx^i,\qquad \omega=dp_i\wedge dx^i .
\end{equation*}
\medskip

\subsection{Pseudo-Riemannian manifolds. Kinetic energy}
Let $g$ be a pseudo-Riemannian metric in $M$.
Then we have an isomorphism of vector fiber bundles
\begin{align*}\label{bemol}
TM &\to T^*M\\
v_a &\mapsto v_a^\bemol\qquad\text{where}\quad v_a^\bemol:=\iiota_{\displaystyle{v_a}}g
\end{align*}
($a$ is a point of $M$ and $\iiota_{v_a}g$ is the inner contraction of $v_a$ with $g$, so that $v_a^\bemol(e_a)=g(v_a,e_a)$ for all vector $e_a$).
In local coordinates, the above isomorphism is
$$x^i=x^i,\quad p_i=g_{ij}\dot x^j,$$
or, also,
$$v_a=v_a^i\left(\frac{\partial}{\partial x^i}\right)_a\quad\longmapsto\quad v_a^\bemol=g_{ij}(a)v_a^j\,d_ax^i.$$

Using
the above isomorphism we can transport to $TM$ all structures on
$T^*M$. In particular, we work with the Liouville form $\theta$
and the symplectic form $\omega$ transported in $TM$ with the
same notation.

From the definitions we have for the Liouville form in $TM$, at
each $v_a\in T_aM$,
\begin{equation}\label{Liouville}
 \theta_{v_a}=v_a^\bemol,
\end{equation}
where the form of the right hand side is to be understood
pulled-back from $M$ to $TM$.

\begin{defi}
\label{cinetica}
The function $T:=\frac 12\,\dot\theta$ on $TM$ (see (\ref{e:puntoforma})) is the \emph{kinetic
energy} associated to the metric $g$. So, for each $v_a\in TM$,
we have $$T(v_a)=\frac 12 \,\dot\theta(v_a)=\frac 12 g(v_a,v_a).$$
\end{defi}
In local coordinates, as usual,
$T(v_a)=\frac 12 g_{ij}(a)v_a^iv_a^j,$
where $v_a=v_a^i(\partial/\partial x^i)_a$; equivalently (see Equation (\ref{e:puntoforma2})),
\begin{equation}\label{e:cinetica}
T=\frac 12\, g_{ij}\dot x^i\dot x^j=\frac 12\, p_j \dot x^j. 
\end{equation}

\medskip

\subsection{Newton's second law}
Now we will address the core of the theory: the laws of motion.
When  $M$ is endowed with a pseudo-Riemannian metric $g$, we get a canonically associated second order differential equation: the so called \emph{geodesic field}. The first law of the Mechanics says that, if no force is impressed, the trajectories of a mechanical system follow the geodesics of $g$ (\emph{Inertia Law}):
\begin{defi}
\label{campogeodesico}
The \emph{geodesic field}  $D_G$ is the vector field on $TM$ univocally determined by
\begin{equation}\label{ecuaciongeodesica}
\iiota_{D_G}\omega+dT=0 \ .
\end{equation}
In other terms, given $g$, the geodesic field $D_G$ is the \emph{symplectic gradient} of (minus) the kinetic energy.
\end{defi}
Observe that $D_GT=0$, as it follows by contracting (\ref{ecuaciongeodesica}) with $D_G$.

In local coordinates $x^i,\dot x^i$,
\begin{equation}\label{geodesico}
D_G=\dot x^i\frac{\partial}{\partial x^i}-\Gamma_{ij}^k \dot x^i
\dot x^j \frac{\partial}{\partial \dot x^k}
\end{equation}
The interesting thing is that
\begin{prop}
The geodesic field is a second order differential equation.
\end{prop}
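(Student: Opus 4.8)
\emph{Proof plan.} By Definition \ref{campogeodesico} the field $D_G$ already exists and is unique (because $\omega$ is nondegenerate, $\iiota_{(\cdot)}\omega$ identifies vector fields with $1$-forms). So the only thing to establish is the \emph{order} of the equation, i.e. that the restriction of $D_G$, viewed as a derivation, to the subring $\cC^\infty(M)\subseteq\cC^\infty(TM)$ is $\dot d$ (equivalently, that $\pi_*(D_{G,v_a})=v_a$ for every $v_a$); cf. Definition \ref{def: second_order_equation}. This is a local statement, so it suffices to argue in an arbitrary chart.

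The plan is to write out both terms of the defining equation \eqref{ecuaciongeodesica} in the coordinates $x^i,\dot x^i$ of $TM$ obtained from $x^i,p_i$ through the metric isomorphism, that is, with $p_i=g_{ij}\dot x^j$, $\omega=dp_i\wedge dx^i$ and $T=\tfrac12 g_{ij}\dot x^i\dot x^j$. Putting $D_G=A^i\,\partial/\partial x^i+B^i\,\partial/\partial\dot x^i$ with $A^i,B^i$ unknown functions on $TM$, one expands $\iiota_{D_G}\omega=(D_G p_i)\,dx^i-A^i\,dp_i$ and, using $dp_i=\partial_k g_{ij}\,\dot x^j\,dx^k+g_{ij}\,d\dot x^j$ together with $dT=\tfrac12\partial_i g_{kl}\,\dot x^k\dot x^l\,dx^i+g_{ij}\dot x^j\,d\dot x^i$, one collects in \eqref{ecuaciongeodesica} the coefficients of the independent $1$-forms $dx^i$ and $d\dot x^i$ separately. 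The key point is that the vanishing of the coefficient of $d\dot x^j$ reads $-g_{ij}A^i+g_{ij}\dot x^i=0$ (symmetry of $g$ is used here), and since $g$ is nondegenerate this forces $A^i=\dot x^i$. That is exactly the condition defining a second order differential equation, so the proof is complete. (For completeness, the coefficients of $dx^i$ then pin down $B^i=-\Gamma^i_{jk}\dot x^j\dot x^k$, recovering \eqref{geodesico}, but this part is not needed for the Proposition.)

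I do not expect a genuine obstacle here: the single point that needs care is the bookkeeping, since in these coordinates $p_i=g_{ij}(x)\dot x^j$ depends on \emph{both} $x$ and $\dot x$, so $dp_i$ and $D_G p_i$ each split into a $dx$–part and a $d\dot x$–part; once this is organized correctly the computation is routine. A coordinate–light variant is also available and perhaps cleaner: for $\alpha\in\Omega^1(M)$ let $V^\alpha$ be its vertical lift to $TM$ (in coordinates $V^\alpha=g^{ij}\alpha_i\,\partial/\partial\dot x^j$), which is characterized intrinsically as the unique $\pi$-vertical field with $\iiota_{V^\alpha}\omega=\pi^*\alpha$; contracting \eqref{ecuaciongeodesica} with $V^\alpha$ gives $\la\pi^*\alpha,D_G\ra=\omega(V^\alpha,D_G)=dT(V^\alpha)=V^\alpha(T)$, and a one–line computation using $T=\tfrac12 g(\cdot,\cdot)$ yields $V^\alpha(T)=\alpha_i\dot x^i=\dot\alpha$ (see \eqref{e:puntoforma2}). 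Taking $\alpha=df$ gives $D_Gf=\dot f$ for every $f\in\cC^\infty(M)$, which is again precisely the assertion.
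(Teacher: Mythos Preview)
Your proposal is correct. The paper does not actually write out a proof of this proposition: it simply displays the local expression \eqref{geodesico} and leaves the verification implicit. Your first (coordinate) argument is precisely the computation that produces \eqref{geodesico}, so it matches the paper's intended route; the key step---reading off the $d\dot x^j$-coefficients of $\iiota_{D_G}\omega+dT$ to get $g_{ij}(\dot x^i-A^i)=0$---is exactly right and the bookkeeping caveat you flag (that $p_i=g_{ij}(x)\dot x^j$ depends on both $x$ and $\dot x$) is the only subtlety.

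Your second, coordinate-light argument via the vertical lift $V^\alpha$ with $\iiota_{V^\alpha}\omega=\pi^*\alpha$ is a genuinely different and somewhat cleaner route that the paper does not take. It trades the explicit expansion of $dp_i$ for the single identity $V^\alpha(T)=\dot\alpha$ (equivalently equation \eqref{EcuacionT} in the special case $D=D_G$, $W=V^\alpha$, read backwards), and it yields $D_Gf=\dot f$ directly without ever computing the $B^i$. The price is that one must know or check the characterization of $V^\alpha$; the gain is that the argument is manifestly independent of the choice of chart and isolates exactly the piece of \eqref{ecuaciongeodesica} relevant to the ``second order'' condition.
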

The solutions of $D_G$ are the geodesic curves of $g$.
\bigskip

Now we prescribe the central point of Newtonian Mechanics which simply states the following:
\medskip

\emph{``A mechanical system evolves according to a second order differential equation.''}
\medskip

 On the other hand, we can obtain all of the second order equations by adding a vertical tangent vector field $V$ (=\emph{acceleration}) to the geodesic field (=inertial motion). This vertical vector field translates by means of the symplectic structure $\omega$ into a horizontal differential 1-form $\alpha$ (=\emph{force form}) and conversely:
 \begin{equation}\label{fuerzaaceleracion}
 \iota_V\omega=\alpha,\quad\text{or, in coordinates,}\quad V^i=\alpha^i
 \end{equation}
 where $V=V^i\partial/\partial \dot x^i=V_i\partial/\partial p_i$, $\alpha=\alpha_i dx^i$ and we raise and lower  the indexes as usual: $V_i:=g_{ij}V^j$, $\alpha^i:=g^{ij}\alpha_j$.
   In other words, thanks to the metric $g$, accelerations and forces are bi-univocally determined.
 \medskip

Then, from the above considerations, it is enough to take $D=D_G+V$ to prove the following main result:
\begin{thm}\textbf{\em (Newton's Law, \cite{MecanicaMunoz})}\label{teoremaalfa}
The metric $g$ establishes a one-to-one correspondence between
second order differential equations and horizontal $1$-forms in
$TM$.

The second order differential equation $D$ and the horizontal
$1$-form $\a$ that correspond to each other are related by
\begin{equation}\label{eq:Newton}
{\emph \iiota_D\omega+dT+\alpha=0}.
\end{equation}
The triple $(M,g, \alpha )$ will be called a \emph{mechanical
system}:
``Under the influence of a force $\alpha$ the
trajectories of the mechanical system satisfy (= are the integral
curves of) the second order differential equation $D$ associated with $\alpha$''.
\end{thm}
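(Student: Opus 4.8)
The plan is to establish both halves of the statement --- the existence of the correspondence and its bijectivity --- by ``peeling off'' the geodesic field. Two facts recalled above carry most of the weight: first, the second order differential equations on $TM$ form an affine space modelled on the $\cC^\infty(TM)$-module of $\pi$-vertical vector fields, with the geodesic field $D_G$ of Definition~\ref{campogeodesico} --- which, as recalled above, is itself a second order differential equation --- as one privileged element; second, because the $2$-form $\omega$ transported to $TM$ is symplectic, the contraction $X\mapsto\iota_X\omega$ is a $\cC^\infty(TM)$-linear isomorphism between vector fields and $1$-forms on $TM$. Granting these, the theorem reduces to checking that this isomorphism carries the affine subspace $D_G+\{\text{vertical fields}\}$ onto the affine subspace $-dT+\{\text{horizontal }1\text{-forms}\}$, the offsets being linked precisely by (\ref{ecuaciongeodesica}).

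The first genuine step, and the one I expect to be the crux, is the claim that a vector field $V$ on $TM$ is $\pi$-vertical \emph{if and only if} $\iota_V\omega$ is horizontal. For ``only if'' I would use that $\omega$ is the pull-back under $\bemol$ of the canonical symplectic form of $T^*M$, whose fibres over $M$ are Lagrangian; hence the vertical distribution on $TM$ is its own $\omega$-orthogonal complement, so $\omega(V,W)=0$ whenever $V$ and $W$ are both vertical, i.e. $\iota_V\omega$ annihilates every vertical field and is therefore horizontal. For ``if'', if $\alpha$ is horizontal and $V$ is the unique field with $\iota_V\omega=\alpha$, then $\omega(V,W)=\alpha(W)=0$ for every vertical $W$, so $V$ lies in the $\omega$-orthogonal of the vertical distribution, which --- that distribution being Lagrangian --- is the vertical distribution itself. (In the coordinates of Section~\ref{s:mecanica}, with $\omega=dp_i\wedge dx^i$, this is just the computation $\iota_V\omega=V_i\,dx^i$ for $V=V_i\,\partial/\partial p_i$.) This identifies the horizontal $1$-forms with the $\omega$-image of the vertical distribution, which is the content implicit in (\ref{fuerzaaceleracion}).

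With the claim in hand the two directions are short. From a horizontal $1$-form $\alpha$ I would take the vertical field $V$ determined by (\ref{fuerzaaceleracion}) and set $D:=D_G+V$; by the affine-space remark $D$ is again a second order differential equation, and substituting $D=D_G+V$ into $\iota_D\omega+dT$ and using (\ref{ecuaciongeodesica}) together with the linearity of $X\mapsto\iota_X\omega$ reduces the asserted identity (\ref{eq:Newton}) to (\ref{fuerzaaceleracion}), which holds by construction. Conversely, from a second order differential equation $D$ the difference $V:=D-D_G$ annihilates $\cC^\infty(M)$ --- both $D$ and $D_G$ restrict there to $\dot d$ --- hence is vertical, so $\alpha:=-(\iota_D\omega+dT)=-\iota_V\omega$ is horizontal by the claim, and this is the $1$-form appearing in (\ref{eq:Newton}) for $D$. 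Uniqueness on both sides follows at once from non-degeneracy of $\omega$: if $D$ and $D'$ satisfy (\ref{eq:Newton}) with the same $\alpha$ then $\iota_{D-D'}\omega=0$, so $D=D'$; dually, the $1$-form attached to a given $D$ is forced.

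Thus the only non-formal ingredient is the ``if and only if'' claim linking verticality with horizontality through $\omega$ --- the place where the specific geometry of the cotangent bundle enters, via the Lagrangian fibres of $T^*M$ (equivalently, the shape of $\omega$ in canonical coordinates). All the rest is the affine structure of the space of second order equations, the bilinearity of interior multiplication and the defining equation (\ref{ecuaciongeodesica}) of the geodesic field, so beyond this I anticipate no further obstacle --- only the routine sign bookkeeping needed to match the offset $-dT$ of (\ref{ecuaciongeodesica}) with the normalization chosen in (\ref{eq:Newton}) and (\ref{fuerzaaceleracion}).
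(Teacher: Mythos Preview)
Your proposal is correct and follows essentially the same route as the paper: the paper's argument is precisely ``peel off $D_G$ via (\ref{ecuaciongeodesica}), use the affine structure of second order equations over vertical fields, and translate vertical fields to horizontal forms through $\omega$ as in (\ref{fuerzaaceleracion}),'' then set $D=D_G+V$. You have simply made explicit the biconditional (vertical $\Leftrightarrow$ $\iota_V\omega$ horizontal) and the uniqueness from non-degeneracy that the paper leaves implicit; your closing caveat about sign bookkeeping is apt, since the paper's own (\ref{fuerzaaceleracion}) and (\ref{alfa2}) differ by a sign in exactly the way you anticipate.
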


In local coordinates,
let the second order differential equation $D$ and the differential 1-form $\alpha$ be given respectively by
\begin{equation}\label{formulaD}
D=\dot x^i\frac{\partial}{\partial x^i}+\ddot x^i
\frac{\partial}{\partial \dot x^i} \ ,\qquad \alpha=\alpha_i\,dx^i,
\end{equation}
where the $\ddot x^i$'s, and the $\alpha_k$'s are certain functions of $x,\dot x$. Then, a straightforward computation gives explicitly the relationship
\begin{equation}\label{alfa2}
\ddot x^k=-(\alpha^k+\Gamma_{ij}^k\dot x^i\dot
x^j).
\end{equation}
\medskip

\section{Relativistic mechanical systems}\label{s:relatividad}
We will consider relativistic systems as particular cases of classical mechanical systems within the previous approach. Let us describe now what is the peculiarity of such systems: motion of a particle in Relativity
(yet in the special one) is parameterized by the ``proper time'' of
that particle, whose ``infinitesimal element'' $ds$ is the so
called length element associated with  a Lorentzian metric $g$.
\medskip

For any pseudo-Riemannian metric $g$, the length element along a trajectory $x^i=x^i(t)$ parameterized by a parameter $t$ is usually written as
$$ds=\sqrt{\left|g_{ij}\frac{dx^i}{dt}\frac{dx^j}{dt}\right|}\,dt.$$
We look for a well defined differential 1-form on $TM$, which coincides with $ds$ along each trajectory (once lifted from $M$ to $TM$). The expression under the square symbol is easily recognizable as the specialization of $\dot\theta=2T$ (see (\ref{e:cinetica})).

The main task now is to find a ``generic'' substitute for ``$dt$'' (recall that $t$ is just the parameter for one single trajectory): any quotient $\beta/\dot\beta$ for a given horizontal  differential 1-form $\beta$  can play that role; we take $\beta=\theta$ in order to get a canonical choice. Finally, we define the \emph{length element} (maintaining the traditional notation) to be
$$ds:=\frac{\theta}{\sqrt{|\dot\theta|}}=\frac{\theta}{\sqrt{|2T|}},$$
where
$T$ is the kinetic energy function (this intrinsic construction of the length element as a differential 1-form defined on $TM$ was done in \cite{RelatividadMunoz} (see, also, \cite{RM})). In local coordinates:
$$ds=\frac{p_i}{\sqrt{|2T|}}\,dx^i=\frac{p_i}{\sqrt{|p_j\dot x^j|}}\,dx^i=\frac{p_i}{\sqrt{|g^{jk}p_jp_k|}}\,dx^i.$$

In this way,  the
classical ``$ds$'' is the restriction of the length element $ds$
to the curve in $TM$ describing the lifting of the corresponding
parametrized curve in $M$. Then we can integrate $ds$ along the parameterized trajectory (lifted to $TM$) to obtain ``a'' length. It turns out that this length does not depend on the parametrization, and this is the reason for which it makes sense  to talk about the length of a curve.

If $D$ is a second order differential equation on $M$, when we say
that a curve solution can be parametrized  by the length element
$ds$ we means that the proper parameter for such a curve solution
of $D$ is the specialization of $ds=\theta/\sqrt{|\dot\theta|}$ (we assume $\dot\theta\ne 0$); that
is to say,
$$\theta(D)/\sqrt{|\dot\theta|}=\dot\theta/\sqrt{|\dot\theta|}=1.$$
Thus, $$\dot\theta=2T=\pm 1$$ on
such a curve solution of $D$.

Therefore, the second order differential equations on $M$ which
describe relativistic motions are vector fields $D$ (on
$TM$) tangent to the hypersurfaces $\dot\theta=2T=\pm1$. As a consequence, $DT=0$ on $T=\pm 1/2$.

On the other hand, by inner contraction with $D$ of the
Newton's equation we get
$$0=\iiota_D\iiota_D\omega+\iiota_D dT+\iiota_D\alpha=DT+\dot\alpha,$$
since $\iiota_D\alpha=\dot\alpha$ because $\alpha$ is horizontal (only terms $dx^i$ appear and $\iiota_Ddx^i=Dx^i=\dot x^i$).

Let us put $D=D_G+W$. From $D_GT=0$, we obtain
\begin{equation}\label{EcuacionT}
WT+\dot\alpha=0.
\end{equation}
 On the other hand, locally, $W=W_i\partial/\partial p_i$ and $T=(1/2)g^{ij}p_ip_j$ so that
$$WT=g^{ij}p_iW_j=\dot x^jW_j.$$
 In this way, in the particular case when the $W_j$ are homogeneous functions with respect to $\dot x^i$, the condition $WT=0$ for $\dot\theta=2T=1$ implies
 $WT=0$ for an arbitrary value of $T$; that is to say, $WT$ (and so $DT$) always vanishes for that class of $W$.

For all the above, it seems reasonable to give the following
\begin{defi}\label{DefRelativista}
A \emph{relativistic field} on $(M,g)$ is a second order
differential equation $D$ such that $DT=0$, where $T$ is the
kinetic energy function associated with $g$.
\end{defi}

The previous discussion gives us
\begin{thm}\label{criterio}
A second order differential equation $D$ on $(M,g)$ is a
relativistic field if and only if the associated 1-form of force $\alpha$  meets one (and therefore both) of the following two equivalent conditions
\begin{enumerate}
\item $\dot\alpha=0,$
\item $\alpha$ is a contact form.
\end{enumerate}
\end{thm}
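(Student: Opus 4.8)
The plan is to read off from Newton's equation alone an identity relating $DT$ and $\dot\alpha$, and then to couple it with the description of the contact system already obtained in Section \ref{s:mecanica}.

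First I would take the relation (\ref{eq:Newton}) defining the pair $(D,\alpha)$, namely $\iiota_D\omega+dT+\alpha=0$, and contract it with the vector field $D$ itself. The first summand vanishes since $\iiota_D\iiota_D\omega=\omega(D,D)=0$ because $\omega$ is a $2$-form; the second gives $\iiota_D\,dT=DT$; and the third gives $\iiota_D\alpha=\dot\alpha$, since $\alpha$ is horizontal: writing $\alpha=\alpha_i\,dx^i$ one has $\iiota_D\alpha=\alpha_i\,Dx^i=\alpha_i\,\dot x^i=\dot\alpha$ by (\ref{e:puntoforma2}). Hence $DT+\dot\alpha=0$ as an identity of functions on the whole of $TM$. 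By Definition \ref{DefRelativista}, $D$ is a relativistic field exactly when $DT$ vanishes identically, and by this identity that happens if and only if $\dot\alpha$ vanishes identically, which is condition (1).

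For the equivalence of (1) and (2) there is nothing further to prove: it was already observed in Section \ref{s:mecanica} that a horizontal $1$-form is a contact form precisely when the associated function on $TM$ is zero, i.e.\ precisely when $\dot\alpha=0$; and the force $\alpha$ is horizontal by Theorem \ref{teoremaalfa}. Concatenating the two equivalences gives the statement. I do not expect any real obstacle here; the one point worth making explicit is that the contraction identity $DT+\dot\alpha=0$ holds everywhere on $TM$, so requiring $DT\equiv 0$ (as in Definition \ref{DefRelativista}) is genuinely equivalent to $\dot\alpha\equiv 0$, with no need to restrict to the hypersurfaces $2T=\pm1$; the remark on homogeneous $W$ preceding Definition \ref{DefRelativista} explains why the restricted and unrestricted versions coincide for the systems of interest, but it plays no role in the proof of the statement as phrased.
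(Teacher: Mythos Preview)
Your proof is correct and follows essentially the same route as the paper: the paper also contracts Newton's equation (\ref{eq:Newton}) with $D$ to obtain $DT+\dot\alpha=0$ (this is the line immediately preceding Equation (\ref{EcuacionT})), and the equivalence of (1) and (2) is likewise drawn from the characterization of the contact system in Section \ref{s:mecanica}. Your closing remark, that the homogeneity discussion before Definition \ref{DefRelativista} is not needed for the theorem as stated, is accurate and a useful clarification.
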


It is very important to remark that the above condition results to be independent of the metric $g$; in particular, it has nothing to do with the signature of $g$.
\medskip

In coordinates, a force $\alpha$ is relativistic if and only if
$$\alpha_i\,\dot x^i=0,$$
which can be read as follows: relativistic forces are those that does not work (\emph{gyroscopic forces}).

A significant class of relativistic mechanical systems are, of course, those defined by Lorentz type forces:
$$\alpha:=\iiota_{\dot d}F=\dot x^iF_{ij}\,dx^j,$$
where $F$ is a differential 2-form $F=F_{ij}dx^i\wedge dx^j$ so that
$$\dot\alpha=\dot x^iF_{ij}\dot x^j=0$$
because $F_{ij}=-F_{ji}.$
In the case of electromagnetic forces, the forms $F$ come from the configuration manifold $M$, but nothing prevents them, even if they are horizontal, depending on the velocities:
$F_{ij}=F_{ij}(x,\dot x)$; as a particular type, it could be $F_{ij}=F_{ij,k}(x)\dot x^k$. In that case, the force form depends quadratically on the velocities, etc.
\medskip

On the other hand, a \emph{conservative system} $\alpha=d\Phi$ (for some function $\Phi$ in $M$) can never be relativistic  because $\dot\alpha=\dot\Phi\ne 0$ except in the trivial case $\alpha=0$ (the \emph{free of forces system}). Observe that this conclusion is, a priori, completely independent of any consideration about ``action at a distance''.
\medskip

We refer to \cite{RelatividadMunoz} for a more detailed analysis of relativistic forces in the sense above defined.
\bigskip

\section{Constraints}\label{s:ligaduras}

The general presentation of mechanical systems with constraints in the approach given in Section \ref{s:mecanica} can be seen in \cite{MecanicaMunoz,RM}. Here we will limit ourselves to developing two particular cases that will be used later: time constraint and relativistic correction.

\subsection{Time constraint}
In a mechanical system build on a  pseudo-Rie\-man\-nian manifold as above we do not have a priori a time function (which is not possible, although this question admits a alternative approach; see the remark below). However, it is always possible to force a given function $t$ to play the role of time, canonically modifying the mechanical system. This constraint will ``produce'' a force on the trajectories of the original system making $\dot t$ to stay constant.

In order to put at its true value the process of time constraint, we must note the following remarkable  fact (see \cite{RM,Tiempo}): every conservative $n$-dimensional mechanical system can be obtained by applying a time constraint on an appropriate $(n+1)$-dimensional inertial system (inertial in the sense of being ``free of forces'': $\alpha = 0$ in Equation (\ref{eq:Newton})). This scheme is reminiscent  of the Kaluza-Klein theory in some respects.

\begin{obs}
There is a concept that replaces the (non existing!) function ``time'' on the tangent bundle; this was introduced by Muñoz-Díaz in \cite{MecanicaMunoz} and was called the \emph{class of time}: it is a well-defined family of differential forms that measures duration along each trajectory; the existence of the class of time rests exclusively on the infinitesimal structure of the configuration manifold and is, therefore, independent of the datum of a particular pseudo-Riemannian metric. See also \cite{Tiempo} where this concept is discussed in detail giving a clear meaning to the so-called \emph{absolute time} and, in addition, possible approaches to the role of time in Quantum Mechanics are proposed.
\end{obs}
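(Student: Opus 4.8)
What has to be justified is that the \emph{class of time} is a genuinely well defined object --- a family of differential forms on $TM$ that records duration along every trajectory --- and that producing it uses nothing but the differentiable structure of $M$. The plan is to turn ``measuring duration along a trajectory'' into a single pointwise equation for horizontal $1$-forms, and then to identify its solution set as a coset of the contact system $\Omega$. First I would make precise what ``measuring duration'' should mean. Given a parametrized curve $\gamma\colon I\to M$ with lift $\dot\gamma\colon I\to TM$ and a horizontal $1$-form $\tau=\tau_i(x,\dot x)\,dx^i$ on $TM$, a direct computation using the coordinate description of $\dot\gamma$ (namely $x^i=x^i(t)$, $\dot x^i=dx^i/dt$) gives $\dot\gamma^{*}\tau=(\tau_i\dot x^i)\,dt=\dot\tau\,dt$ by (\ref{e:puntoforma2}). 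Hence $\tau$ pulls back to the parameter differential $dt$ along \emph{every} lifted trajectory precisely when $\dot\tau\equiv 1$ on $TM$ minus its zero section (on the zero section no horizontal form can meet this condition, and no genuine trajectory lives there). So the whole matter reduces to the equation $\dot\tau=1$ for $\tau$ horizontal.

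Next I would describe the solution set of $\dot\tau=1$. For existence: off the zero section the open sets $U_i=\{\dot x^i\neq 0\}$ cover $TM$, and on $U_i$ the horizontal form $(\dot x^i)^{-1}\,dx^i$ (no summation) has associated function $(\dot x^i)^{-1}\dot x^i=1$; a partition of unity $\{\rho_i\}$ subordinate to $\{U_i\}$ then yields the global horizontal form $\tau=\sum_i\rho_i\,(\dot x^i)^{-1}\,dx^i$ with $\dot\tau=\sum_i\rho_i=1$. For uniqueness up to the contact system: if $\dot\tau=\dot\tau'=1$ then $\sigma:=\tau-\tau'$ is horizontal with $\dot\sigma=0$, i.e.\ $\sigma\in\Omega$ by the characterization of contact forms in Section \ref{s:mecanica}, and conversely every form in $\tau+\Omega$ is a solution. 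Thus the solution set is an affine subspace modeled on $\Omega$, and this coset is the class of time. Metric independence is then automatic: the derivation $\dot d$ (hence $\dot\tau=\la\tau,\dot d\ra$), the notion of horizontal form, and the contact system $\Omega$ were all introduced in Section \ref{s:mecanica} from the differentiable structure of $M$ alone, before any metric appeared; so the class of time involves neither $g$ nor its signature. It is also worth remarking why no single \emph{function} can play this role: a function $t$ on $M$ has $\dot t$ linear in the velocities, which can never be the constant $1$.

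I expect the only real obstacle to be conceptual rather than computational: one must first accept that ``time/duration'' has to be represented by a $1$-form on the velocity manifold pulling back to $dt$ along lifts, and then notice that the defining equation $\dot\tau=1$ admits no distinguished solution, so the intrinsic object is necessarily the whole coset (the ``class'') rather than a single form. The remaining points --- restricting to $TM\setminus 0_M$ so that the forms $(\dot x^i)^{-1}\,dx^i$ make sense and every point sits on a lifted trajectory, and invoking a partition of unity for global existence (a non-canonical device whose output, non-emptiness of the class, is nonetheless canonical) --- are routine.
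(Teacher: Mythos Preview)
Your sketch is mathematically sound, but there is nothing in the paper to compare it with: the statement is a \emph{Remark}, and the paper offers no proof at all --- it simply records the existence of the ``class of time'' and defers entirely to the references \cite{MecanicaMunoz,Tiempo} for the construction and its properties. What you have written is in fact the construction one finds there: the class of time is the coset $\tau+\Omega$ of horizontal $1$-forms satisfying $\dot\tau=\la\tau,\dot d\ra=1$ on $TM$ off the zero section, well defined modulo the contact system $\Omega$ and built from $\dot d$ and horizontality alone, hence independent of any metric. So your argument is correct and matches the intended approach, it just supplies a proof the present paper deliberately omits.
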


We now will describe how the time constraint is performed.
We start from an arbitrary  mechanical system $(M,g,\alpha)$ with Newton equation
\begin{equation}\label{Newtont1}
\iiota_D\omega+dT+\alpha=0\ .
\end{equation}

Now, let us fix a function $t\in\C^\infty(M)$ so that $dt$ is regular (eventually, shrinking the domain of validity of the equations). We want trajectories where $\dot t$ keeps constant. In other words, $\dot t$ have to be a first integral of the equations when suitably modified.

With the above purposes we consider the following congruence:
\begin{align}
& \iiota_{\overline D}\,\omega+dT+\alpha+\lambda\, dt=0\label{Newtont2} \\
& \overline D\dot t=0,
        \label{Newtont3}
\end{align}
for a suitable ``multiplier'' $\lambda\in\C^\infty(TM)$.

In order to find the second order differential equation $\overline D$, we put $\overline D=D+W$, where $D$ solves  (\ref{Newtont1}) and $W$ is a vertical vector field on $TM$. In local coordinates $\{t,x^1\dots,x^n,p_0,p_1,\dots,p_n\}$ (say, $t=x^0$), we have $W=W_\alpha\partial/\partial p_\alpha$, $\alpha=0,\dots,n$ and (\ref{Newtont2})+(\ref{Newtont3}) become
$$\iiota_W\omega+\lambda dt=0,\quad W\dot t=-D\dot t.$$
\smallskip

On the other hand, $\iiota_W\omega=W_\alpha\,dx^\alpha$ and $\dot t=g^{0\alpha}p_\alpha$, so that
$$W_0+\lambda=0,\,\, W_i=0, i\ge 1,\quad\text{and}\quad g^{00}W_0=-D\dot t.$$
Finally we get, locally,
$$W=-\frac{D\dot t}{g^{00}}\frac{\partial}{\partial p_0}.$$
This proves that $W$ (and so, also  $\overline D$)  exists and is uniquely determined by:
$$\iiota_{\overline D}\,\omega+dT+\alpha-\frac {D\dot t}{g^{00}}\,dt=0,$$
which is equivalent to substitute in Equation (\ref{Newtont1}) the force form $\alpha$ by the modified force form
$$\overline\alpha:=\alpha+\lambda dt$$
where the ``multiplier'' $\lambda$ equals $-(D\dot t)/g^{00}$.
(For a more intrinsic proof of the existence of $\overline D$, see \cite{Tiempo}). 

\medskip

By it very construction, the curve solutions of $\overline D$ are parameterized by constant multiples of $t$.
\medskip

\subsection{Relativistic correction}

In this case, as we shall see, the constraint has a slightly different nature. As before, we start from an arbitrary mechanical system  (\ref{Newtont1}); as a rule, that system is not relativistic in the sense of Definition \ref{DefRelativista} because $\dot\alpha$ can be non-null (Theorem \ref{criterio}); this situation includes all the conservative systems $\alpha=d\Phi$ except the geodesic case $d\Phi=0$.
\medskip

The modification to achieve relativistic trajectories (in the sense of the Definition \ref{DefRelativista}) will consist in exerting a force in the very trajectory's direction. Generically, such a direction is given by $\dot d=\dot x^i\partial/\partial x^i$ (Section \ref{s:mecanica}, Equation (\ref{dpunto})); on the other hand, $\iota_{\dot d}\,\omega=\theta$, so that, finally, what we will do is consider the addition of forms of force that are multiples of $\theta$.

In view of the above discussion, in order to get a relativistic system we can modify the force along the trajectories as follows: we replace $\alpha$ (in the open set $\dot\theta\ne 0$) by
$$\widehat\alpha:=\alpha-\frac{\dot\alpha}{\dot\theta}\,\theta,$$
which, by construction, is a relativistic force:
$$\dot{\widehat\alpha}=\dot\alpha-\frac{\dot\alpha}{\dot\theta}\,\dot\theta=0.$$
In other terms, we substitute (\ref{Newtont1}) by
\begin{equation}\label{NewtonR}
\iiota_D\omega+dT+\alpha-\frac{\dot\alpha}{\dot\theta}\,\theta=0\ .
\end{equation}

  From the point of view of general Mechanics, the above procedure is the natural \emph{``relativistic correction''} of (\ref{Newtont1}).
  \medskip

  As a particular instance, it can be seen that Formula (2.25), Ch. II, of \cite{Barut} is obtained as the result of the above proposed relativistic correction: when the initial force $\alpha$ is conservative, $\alpha=d\Phi$, we have
  $$\widehat\alpha=d\Phi-\frac{\dot\Phi}{\dot\theta}\,\theta,$$ which, in local coordinates, is
  $$\widehat\alpha_i=\Phi,_{i}-\frac{\Phi,_{j}\dot x^j}{\dot\theta}\,p_i,\qquad\text{or}\qquad
    \widehat\alpha^k=\Phi,_{i}\left(g^{ik}-u^ku^i\right),$$
    where $\Phi,_i:=\partial\Phi/\partial x^i$ and $u^i:=\dot x^i/\sqrt{\dot\theta}$ (for concreteness, we assume $\dot\theta>0$).
 \medskip

 Following the same steps, we can obtain the coordinate expression for $\widehat\alpha$ in general:
 $$\widehat\alpha_i=\alpha_i-\alpha_ju^ju_i,\qquad\text{or}\qquad \widehat\alpha^k=\alpha_i\left(g^{ik}-u^ku^i\right).$$

\bigskip

\section{Intermediate integrals}\label{s:intermedias}

This is a classical concept in the theory of differential equations: an intermediate integral of a given differential equation is another differential equation of lower order, whose solutions are also solutions of the above one. As an elementary example, first order differential equations $y'=c$ are intermediate integrals of the second order equation $y''=0$. In some situations, to integrate all or a part of the intermediate integrals can lead to integrate the starting equation. This is the case of the well known Hamilton-Jacobi equation (see below).

We now  are interested in intermediate integrals of a second order differential equation $D$ (Definition \ref{def: second_order_equation}); therefore, we look for vector fields $v$ on $M$ (first order differential equations) whose curve solutions are also solutions of $D$. Note that solutions of $v$ are parameterized curves in $M$, so we must prolong these curves to $TM$ in order to be able to say whether they are solutions of $D$ or not. What happens is that, due to its special structure, all solutions of $D$ as curves in $TM$ are prolongation of curves that come from $M$.

\begin{lemma}
The tangent fields $v$ on $M$ that are intermediate integrals of $D$ in (\ref{eq:Newton}) are precisely those holding
\begin{equation}\label{NewtonIntermedia}
\iiota_{\displaystyle v}\, dv^\bemol+dT(v)+v^*\alpha=0,
\end{equation}
where $v^*\alpha$ is the pull-back of $\alpha$ by means of the section $v\colon M\to TM$ and $T(v)=v^*T$ is the kinetic energy function $T$ specialized to the image of $v$.
\end{lemma}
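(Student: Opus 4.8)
The plan is to reduce the statement to a direct computation by pulling back Newton's equation (\ref{eq:Newton}) along the section $v\colon M\to TM$. The key observation is that a tangent field $v$ is an intermediate integral of $D$ precisely when the image submanifold $N:=v(M)\subset TM$ is $D$-invariant, i.e. $D$ is tangent to $N$ at every point of $N$; this is because, by the second-order property of $D$ (Definition \ref{def: second_order_equation}), the integral curve of $D$ through a point $v_a\in N$ projects to the integral curve $\gamma$ of $v$ through $a$, and that $D$-curve equals the lift $\dot\gamma$ exactly when it stays in $N$. So first I would make this equivalence precise: $v$ is an intermediate integral of $D$ $\iff$ $D|_N$ is tangent to $N$ $\iff$ $D|_N = v_*(w)$ for the tangent field $w$ on $M$ obtained by $w_a := \pi_*(D_{v_a})$, which by the second-order property is nothing but $v$ itself. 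Hence the condition is $D_{v_a}=v_*(v_a)$ for all $a$, equivalently $D$ restricted to $N$ coincides with the push-forward under $v$ of the field $v$ on $M$.

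Next I would exploit that (\ref{eq:Newton}) holds on all of $TM$, so it holds in particular after restriction to $N$ and after pulling back forms by the embedding $\iota_N\colon N\hookrightarrow TM$ (equivalently, by $v\colon M\to TM$, since $v$ is a diffeomorphism onto $N$). The Liouville form on $TM$ satisfies $\theta_{v_a}=v_a^\bemol$ (Equation (\ref{Liouville})), so $v^*\theta = v^\bemol$ as a $1$-form on $M$, and therefore $v^*\omega = v^*d\theta = d(v^\bemol)$. Also $v^*T = T(v)$ and $v^*\alpha$ is the stated pull-back. The only term needing care is $v^*(\iiota_D\omega)$: I would show that, because $D$ is tangent to $N$ and its restriction is $v_*(v)$, one has $\iota_N^*(\iiota_D\omega)=\iiota_v(\iota_N^*\omega)$, i.e. pulling back the contraction equals contracting the pulled-back form with the vector field $v$ (viewed on $M$ via the identification $N\cong M$). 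This is the standard fact that $f^*(\iiota_{Y}\eta)=\iiota_X(f^*\eta)$ whenever $X$ and $Y$ are $f$-related; here $Y=D|_N$ and $X=v$ are $v$-related by the intermediate-integral condition. Pulling back (\ref{eq:Newton}) then yields exactly (\ref{NewtonIntermedia}).

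For the converse I would run the argument backwards: given a vector field $v$ on $M$ satisfying (\ref{NewtonIntermedia}), I need to produce the $D$-invariance of $N=v(M)$. Consider the vector field $\overline v:=v_*(v)$ along $N$ (a genuine tangent field on the submanifold $N$); by the second-order property of $D$ and Definition \ref{def: second_order_equation}, both $D|_N$ and $\overline v$ are second-order-compatible in the sense that they project by $\pi_*$ to $v$ on $M$. I would show that the difference $D|_N-\overline v$ is $\pi$-vertical and, using that $\omega$ is symplectic and nondegenerate together with (\ref{NewtonIntermedia}) being the pull-back of (\ref{eq:Newton}), that its contraction with $\iota_N^*\omega$ vanishes; since $\iota_N^*\omega=d(v^\bemol)$ need not be symplectic, the genuinely delicate point is here — the verification is best done in the local coordinates $x^i$ on $M$, writing $v=v^i(x)\partial/\partial x^i$, $v^\bemol=g_{ij}v^j\,dx^i$, comparing (\ref{NewtonIntermedia}) coefficient-by-coefficient with (\ref{alfa2}) specialized to $\dot x^i=v^i(x)$, and checking that the resulting system of equations in $v^i(x)$ is exactly the condition that the lift of every integral curve of $v$ solves $D$. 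I expect this coordinate bookkeeping, together with cleanly handling the distinction between $\iiota_D\omega$ on $TM$ and $\iiota_v(v^*\omega)$ on $M$, to be the main obstacle; everything else is formal naturality of pull-back.
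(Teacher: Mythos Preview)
Your proposal is correct and follows exactly the approach the paper sketches: the paper does not give a full proof but only indicates that the essence is the characterization $D_{v_x}=v_*v_x$ for all $x\in M$ (referring to \cite{RM}), which is precisely your starting point, and your pull-back argument via $v^*\theta=v^\bemol$ and naturality of $\iiota$ under $v$-related fields is the natural way to complete it. Your worry about the converse is unfounded: once you use the identity (\ref{identidadvorticidad}) or your coordinate comparison with (\ref{alfa2}), equation (\ref{NewtonIntermedia}) reduces to $v^j\partial_j v^k+\Gamma^k_{ij}v^iv^j=-\alpha^k|_v$, which is exactly the equality of the vertical components of $D_{v_a}$ and $v_*(v_a)$, so no subtlety with degeneracy of $v^*\omega$ arises.
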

The essence of the demonstration of the above lemma, which we will not give here, is to apply that $v$ is an intermediate integral of $D$ if and only if it is fulfilled that $D_{v_x}=v_*v_x$ for all $x\in M$; see \cite{RM}.
\medskip

The identity stated in the following lemma is a modern version of the classical formula of $\R^3$ vector calculus ``$\textrm{grad}(\|v\|^2/2)=v\times \textrm{curl}\,{v}+(v\,\cdot\textrm{grad})\,v$''. A proof can be found, for instance, in \cite{Flux}, Lemma 3.4.
\begin{lemma}
With the above notation
\begin{equation}\label{identidadvorticidad}
\iiota_{\displaystyle v}\, dv^\bemol+dT(v)=\left(v^\nabla v\right)^\bemol.
\end{equation}
\end{lemma}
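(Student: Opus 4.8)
The goal is to prove the identity
\begin{equation*}
\iiota_{v}\, dv^\bemol+dT(v)=\left(\nabla_v v\right)^\bemol,
\end{equation*}
where $v^\bemol = \iiota_v g$ is the $1$-form metrically dual to $v$ on $M$, $T(v) = \tfrac12 g(v,v)$, and $\nabla$ is the Levi-Civita connection of $g$. The plan is to verify this by a direct computation in local coordinates $x^i$, which is the most transparent route since all three terms are $1$-forms on $M$ and it suffices to compare their components. I write $v = v^i\,\partial/\partial x^i$, so that $v^\bemol = g_{ij}v^j\,dx^i$ and $T(v) = \tfrac12 g_{jk}v^jv^k$.

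First I would expand $dv^\bemol$. Writing $v^\bemol = (g_{kj}v^j)\,dx^k$ we get
\begin{equation*}
dv^\bemol = \partial_i(g_{kj}v^j)\,dx^i\wedge dx^k,
\end{equation*}
and hence
\begin{equation*}
\iiota_v\,dv^\bemol = v^i\bigl(\partial_i(g_{kj}v^j) - \partial_k(g_{ij}v^j)\bigr)\,dx^k.
\end{equation*}
Next, $dT(v) = \tfrac12\,\partial_k(g_{ij}v^iv^j)\,dx^k$. Adding these two and regrouping the derivatives of $g$ and of $v$ separately, the $dx^k$-component of the left-hand side becomes
\begin{equation*}
v^i v^j\Bigl(\partial_i g_{kj} - \tfrac12\partial_k g_{ij}\Bigr) + g_{kj}\,v^i\partial_i v^j + \text{(terms in $g_{kj}v^j\partial_i v^i$ that cancel against $\tfrac12\cdot 2 g_{ij}v^j\partial_k v^i$ after the antisymmetrization)}.
\end{equation*}
More carefully: the contributions where the derivative falls on $v$ are $v^i g_{kj}\partial_i v^j - v^i g_{ij}\partial_k v^j + g_{ij}v^j\partial_k v^i = v^i g_{kj}\partial_i v^j$, the last two cancelling. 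So the derivatives-of-$v$ part is exactly $g_{kj}\,v^i\partial_i v^j$.

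Then I would identify the derivatives-of-$g$ part as the Christoffel contribution. Using $\partial_i g_{kj} - \tfrac12\partial_k g_{ij}$, symmetrize in $i,j$ (legitimate since it is contracted with the symmetric $v^iv^j$): this equals $\tfrac12(\partial_i g_{kj}+\partial_j g_{ki}-\partial_k g_{ij}) = g_{k\ell}\Gamma^\ell_{ij}$, the standard formula for the Levi-Civita Christoffel symbols with one index lowered. Therefore the $dx^k$-component of the left-hand side is $g_{kj}v^i\partial_i v^j + g_{k\ell}\Gamma^\ell_{ij}v^iv^j = g_{k\ell}\bigl(v^i\partial_i v^\ell + \Gamma^\ell_{ij}v^iv^j\bigr) = g_{k\ell}(\nabla_v v)^\ell$, which is precisely the $dx^k$-component of $(\nabla_v v)^\bemol$. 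This completes the proof. The only real subtlety — not so much an obstacle as a point requiring care — is the bookkeeping in the cancellation of the $\partial v$ terms and the symmetrization step that converts the raw $\partial g$ expression into $g_{k\ell}\Gamma^\ell_{ij}$; everything else is routine. Alternatively one can argue invariantly using Cartan's formula $\iiota_v\,dv^\bemol = \mathcal{L}_v v^\bemol - d(\iiota_v v^\bemol) = \mathcal{L}_v v^\bemol - d(2T(v))$ together with the fact that for the Levi-Civita connection $(\mathcal{L}_v v^\bemol)(w) = g(\nabla_v v, w) + g(\nabla_w v, v) = g(\nabla_v v,w) + \tfrac12 w(g(v,v))$, but the coordinate computation above is self-contained and shorter to write out; I refer to \cite{Flux}, Lemma 3.4, for the details.
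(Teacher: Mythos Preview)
Your coordinate computation is correct: the $\partial v$ cancellations and the symmetrization producing $g_{k\ell}\Gamma^\ell_{ij}$ are both handled properly, and the invariant alternative via Cartan's formula is also valid. Note, however, that the paper does not actually prove this lemma; it simply refers the reader to \cite{Flux}, Lemma~3.4, exactly as you do at the end. So there is no ``paper's own proof'' to compare against --- you have supplied a self-contained argument where the paper gives only a citation.
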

In particular, equation for the intermediate integrals (\ref{NewtonIntermedia}) becomes
\begin{equation}\label{NewtonIntermedia2}
\left(v^\nabla v\right)^\bemol+v^*\alpha=0.
\end{equation}
In the case when $\alpha$ is a differential form on $M$ and if we denote by $\textrm{grad}\,\alpha$ the vector field such that
$$(\textrm{grad}\,\alpha)^\bemol=\alpha,$$
then (\ref{NewtonIntermedia}) is equivalent to
\begin{equation}\label{NewtonIntermedia3}
v^\nabla v+\textrm{grad}\,\alpha=0.
\end{equation}
 That is the version for intermediate integrals of the Newton equation in the form ``\emph{mass $\times$  acceleration = force}''. A more general version, when $\alpha$ depends on the velocities (a horizontal differential form on $TM$) is
\begin{equation}\label{NewtonIntermedia4}
v^\nabla v+\textrm{grad}\,v^*\alpha=0.
\end{equation}
\medskip

\subsection{Conservative systems. Hamilton-Jacobi equation}\label{sconservativos}

As it is well known, when $\alpha$ is an exact differential 1-form, the system
$(M,g,\alpha)$ is said to be  \emph{conservative}. Since
$\alpha$ is horizontal, the potential function $U$ such that
$\alpha=dU$ it has to belong to $\cC^\infty(M)$. The function  $H=T+U$ is the so-called \emph{Hamiltonian} function of the mentioned system. In this case, the Newton equation (\ref{eq:Newton}) is
\begin{equation}\label{newtonconservada}
\iiota_D\omega+dH=0,
\end{equation}
or Hamilton’s Canonical Equations of Motion.

Equation (\ref{NewtonIntermedia}) for the intermediate integrals of $D$
is, now:
\begin{equation}\label{intermediaconservada}
\iiota_{\displaystyle v}dv^\bemol+dH(v)=0
\end{equation}
In particular, when $v$ is a Lagrangian submanifold $TM$ (or equivalently, when $v^\bemol$ is a Lagrangian submanifold of $T^*M$), it holds, by definition,
$dv^\bemol=d\theta|_v=0$, and the above equation is simply
$$dH(v)=0\quad\text{ó}\quad H(v)=\textrm{cte.}$$
Locally, Lagrangian condition means that exists an smooth function $S$ such that $v=\textrm{grad}\,S$ or $v^\bemol=dS$: in this way we get
\begin{prop}
The \emph{Hamilton-Jacobi equation} for a conservative mechanical system (\ref{newtonconservada}) is the partial differential equation of its Lagrangian intermediate integrals
\begin{equation}\label{HamiltonJacobi}
H(\textrm{grad}\,S)=\textrm{cte.}\,\,\,\text{in $TM$, or}\quad
H(dS)=\textrm{cte.}\,\,\,\text{in $T^*M$}
\end{equation}
\end{prop}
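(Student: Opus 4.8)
The plan is to specialize the already-established equation \eqref{intermediaconservada} for the intermediate integrals of a conservative system to the case where the field $v$ is Lagrangian, and then read off what the equation says. First I would recall that by definition a submanifold of $TM$ is Lagrangian when the symplectic form $\omega$ (transported from $T^*M$ via $\bemol$) restricts to zero on it; since $\omega = d\theta$, the image of the section $v$ is Lagrangian precisely when $v^*(d\theta) = d(v^*\theta) = d(v^\bemol) = 0$, using the identity $\theta_{v_a} = v_a^\bemol$ from \eqref{Liouville}. Hence for a Lagrangian $v$ the first term $\iiota_v\, dv^\bemol$ in \eqref{intermediaconservada} vanishes identically, leaving simply $dH(v) = 0$, i.e. $H$ is constant along (the image of) $v$.

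Next I would pass from this intrinsic statement to the PDE form. The condition $d(v^\bemol) = 0$ means $v^\bemol$ is a closed $1$-form on $M$, so locally (on a simply connected chart, shrinking the domain as usual) there is a smooth function $S$ with $v^\bemol = dS$, equivalently $v = \textrm{grad}\,S$. Substituting into $dH(v) = 0$ gives $d\big(H(\textrm{grad}\,S)\big) = 0$ on the relevant open set, hence $H(\textrm{grad}\,S) = \textrm{cte.}$, which on $T^*M$ reads $H(dS) = \textrm{cte.}$ This is exactly \eqref{HamiltonJacobi}. Conversely, any $S$ solving this equation yields via $v = \textrm{grad}\,S$ a Lagrangian section, and retracing the steps shows it satisfies \eqref{intermediaconservada}, so it is genuinely an intermediate integral of $D$; thus the Hamilton–Jacobi equation is precisely the equation of the Lagrangian intermediate integrals, as claimed.

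The only genuinely delicate point is the equivalence ``$\operatorname{im} v$ Lagrangian $\iff$ $v^\bemol$ closed'', together with the local passage to a potential $S$. The forward direction is immediate from $\omega=d\theta$ and naturality of pullback; the point worth stating carefully is that $v^*\theta = v^\bemol$ as a $1$-form on $M$ (this is just \eqref{Liouville} read along the section), so that $v^*\omega = d(v^\bemol)$, and a half-dimensional submanifold that is a graph over $M$ is Lagrangian exactly when this vanishes. The local existence of $S$ is then the Poincaré lemma; the phrase ``$v$ is a Lagrangian submanifold of $TM$'' in the statement should be understood with this local caveat. No estimates or hard analysis are involved — the proof is a short unwinding of definitions once \eqref{intermediaconservada} is in hand.
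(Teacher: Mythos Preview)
Your argument is correct and follows essentially the same route as the paper: specialize \eqref{intermediaconservada} to the Lagrangian case, use $dv^\bemol=d\theta|_v=0$ to kill the first term, obtain $dH(v)=0$, and then pass locally to a potential $S$ via the Poincar\'e lemma. The paper presents this reasoning in the paragraph immediately preceding the proposition rather than as a separate proof, and your added converse and explicit justification of $v^*\theta=v^\bemol$ are welcome but not substantively different.
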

This is a conceptual characterization of the Hamilton-Jacobi equation from which, as it is well known, all the structure of the mechanical system is recovered.
\bigskip

\section{Classical fluids as intermediate integrals}\label{s:fluidoclasico}
We will see in this section how Euler equations of fluids are recovered as those of intermediate integrals of certain classical mechanical systems. When the fluid is isentropic, it is enough to start from a conservative mechanical system. In more general cases we must start with systems with integrable forces in the sense that they admit an integrating factor. In order to stagger the exposition, we start from the simplest case of steady fluids and then address the time-dependent case.

\subsection{Steady case} This is an almost obvious result from the formal point of view. Let us consider a mechanical system $(M,g,\alpha)$ in the case
$$\alpha=\frac{dP}\rho,$$
for given arbitrary functions $P$, $\rho\in\C^\infty(M)$. In that case the equation of the intermediate integrals (\ref{NewtonIntermedia3}) is
$$v^\nabla v+\frac{\textrm{grad}P}{\rho}=0.$$
We recognize immediately Euler equation where $P$ is the pressure and $\rho$ the density.

When $P$ and $\rho$ are functionally dependent, there is a function $\Phi$ such that
$\alpha=dP/\rho=d\Phi,$
and the equation reduces to
$$v^\nabla v+\textrm{grad}\,\Phi=0.$$

If, for example, $M=\R^3$ and $g$ the usual euclidean metric, then we can say that the fluids with $d\Phi=dP/\rho$ are, at the same time, the intermediate integrals of  the equation describing the motion of a point particle under the action of a potential force $d\Phi$.
\medskip

There is no need to say that we can consider also the addition of another force form, say $\alpha=dP/\rho+f$, and we get a quite general Euler equation for an steady ideal fluid:
$$v^\nabla v+\frac{\textrm{grad}P}{\rho}+{\textrm{grad}\,f}=0.$$

\subsection{Unsteady case}
In this particular case, we will consider a manifold $M:=\R\times M^{{s}}$, endowed with a pseudo-Riemannian metric
\begin{equation}\label{metricat}
g=dt^2+ g^s
\end{equation}
where $t$ is the coordinate of $\R$ and $g^s$ is a family of pseudo-Riemannian metrics on $M^s$ (lifted to $M$ by pull back), parameterized by $t$. The idea is to consider $t$ as the time, and $g^s$ as a ``spatial metric''. That situation is not too restrictive: this is the case of the classical mechanics and also that of the relativistic Friedmann-Lemaître-Robertson-Walker cosmological models (v.g., \cite{Hawking}, pp. 134 ff); In fact, locally it is not a restriction at all, although it may involve a fairly arbitrary choice: the use of geodesic gaussian coordinates associated to an appropriate ``spatial initial hypersurface'' puts, locally, a pseudo-Riemannian metric in the above form up to a sign: along the hypersurface, it is chosen a normal vector field which is taken as the set of initial conditions of geodesic curves; in this way, the length of arc of these geodesics plays the role of $t$ (see, for instance, \cite{Eisenhart}, p. 57 and, for the Lorentzian signature case, \cite{Adler}, pp. 59 ff).

 Now, let us consider on $M$ a mechanical system
 \begin{equation}
 \iiota_D\,\omega+dT+\alpha=0,\quad\text{where}\quad \alpha=\frac{dP}\rho,
 \end{equation}
 for given functions $P$ and $\rho$, i.e., we consider an integrable force 1-form $\alpha$, where the function $\rho$ is an integrating factor.

 Now, let us perform on the above mechanical system a time constraint $\dot t=\text{constant}$; the new system  is (\ref{Newtont2})+(\ref{Newtont3}):
 \begin{equation}\label{Newtont4}
 \begin{cases}
 \iiota_{\overline D} \omega+dT+\alpha+\lambda\,dt=0\medskip\\
 \overline D\dot t=0
  \end{cases}
 \end{equation}
 By contracting the first equation of (\ref{Newtont4}) with $\overline D$ we get $(dT+\alpha)(\overline D)+\lambda\dot t=0$ which gives us the useful identity
 $$\lambda=-\frac{(dT+\alpha)(\overline D)}{\dot t}.$$

 According to Lemma \ref{NewtonIntermedia} and Equation (\ref{NewtonIntermedia2}), if we put $\overline\alpha:=\alpha+\lambda dt$, the intermediate integrals $\overline v$ of (\ref{Newtont4}) are those such that
 \begin{equation}\label{Intermediat1}
 (\overline v^\nabla\overline v)^\bemol+\overline v^*\,\overline\alpha=0
 \end{equation}
 Therefore, we need compute $\overline v^*\lambda$; first, easily $\overline v^*\dot t=\overline v(t)$ and now, by taking into account that $\overline D_{\overline v_x}=\overline v_* \overline v_x$ because $\overline v$ is an intermediate integral, we get
 \begin{align*}
 \overline v^*\{(dT+\alpha)(\overline D)\}(x)&=((dT+\alpha)(\overline D))_{\overline v_x}=(dT+\alpha)_{\overline v_x}\overline D_{\overline v_x}\\
                      &=(dT(\overline v)+\overline v^*\alpha)_x\overline v_x=((dT(\overline v)+\overline v^*\alpha)(\overline v))(x)
 \end{align*}
 and then,
 $$ \overline v^*\{(dT+\alpha)(\overline D)\}=(dT(\overline v)+\overline v^*\alpha)(\overline v).$$
 \medskip

 In the specific case $\alpha=dP/\rho$ we have $\overline v^*\alpha=dP/\rho$ (as a form on $M$) and then:
 $$ \overline v^*\{(dT+\alpha)(\overline D)\}=\overline v(T(\overline v))+\frac{\overline v(P)}{\rho}.$$
 In this way, (\ref{Intermediat1}) is
 \begin{equation}\label{Intermediat2}
 (\overline v^\nabla\overline v)^\bemol+\frac{dP}\rho-\frac{\overline v(T(\overline v))+\overline v(P)/\rho}{\overline v(t)}\,dt=0
 \end{equation}
 or, in terms of vector fields,
 \begin{equation}\label{Intermediat3}
 \overline v^\nabla\overline v+\frac{\textrm{grad}\,P}\rho-\frac{\overline v(T(\overline v))+\overline v(P)/\rho}{\overline v(t)}\,\textrm{grad}\,t=0.
 \end{equation}
 Since  (\ref{metricat}) it follows that
 $$\textrm{grad}\,t=\frac{\partial}{\partial t}.$$

 \begin{obs}
Although we have considered the constraint $\dot t=$constant, not every intermediate integral $\overline v$ of the thus constrained mechanical system meets that $\overline v(t)$ is constant: for example $\overline v = x\,\partial/\partial t$ is an intermediate integral of $\ddot t = 0$, $\ddot x = 0$, but $\overline v(t)=x$ is not a constant. The only thing we can assure is that $\overline v(t)$ is constant along the trajectories of $\overline v$ or, in other words, $\overline v(t)$ is a first integral of $\overline v$.
\end{obs}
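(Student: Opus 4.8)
The statement packages two claims: an existence claim --- some intermediate integral $\overline v$ of the time-constrained system fails to have $\overline v(t)$ constant --- settled by the displayed example, and a positive claim --- for \emph{every} intermediate integral $\overline v$ of that system, $\overline v(t)$ is a first integral of $\overline v$. For the example the plan is mere verification: on $M=\R^2$ with coordinates $(t,x)$ and the flat metric $dt^2+dx^2$ the free (geodesic) system $\ddot t=0$, $\ddot x=0$ already satisfies $D_G\dot t=0$ (there are no Christoffel symbols, and $\dot t$ does not depend on $t,x$), so the multiplier $\lambda=-(D_G\dot t)/g^{00}$ is zero and the time constraint changes nothing: $\overline D=\dot t\,\partial/\partial t+\dot x\,\partial/\partial x$. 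One then checks that the integral curves of $\overline v=x\,\partial/\partial t$ satisfy $\dot x=0$, $\dot t=x$, hence along them both $x$ and the velocity $(\dot t,\dot x)=(x,0)$ stay constant, so their lifts to $TM$ satisfy $\ddot t=\ddot x=0$ and are solutions of $\overline D$; thus $\overline v$ is an intermediate integral, while $\overline v(t)=x$ is manifestly non-constant on $M$ (yet constant on each trajectory).

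For the positive claim the plan is to pull the constraint $\overline D\dot t=0$, equation (\ref{Newtont3}), back through the section $\overline v\colon M\to TM$, using the characterization of intermediate integrals recalled in Section \ref{s:intermedias}: $\overline v$ is an intermediate integral of $\overline D$ if and only if $\overline D_{\overline v_x}=\overline v_*\overline v_x$ for every $x\in M$. Recalling that $\dot t\in\C^\infty(TM)$ satisfies $\dot t\circ\overline v=\overline v^*\dot t=\overline v(t)$ (the identity $\overline v^*\dot t=\overline v(t)$ already used above), I would apply the tangent vector $\overline D_{\overline v_x}=\overline v_*\overline v_x$ to the function $\dot t$ and read the outcome in two ways:
\[
(\overline v(\overline v(t)))(x)=\overline v_x(\dot t\circ\overline v)=(\overline v_*\overline v_x)(\dot t)=(\overline D_{\overline v_x})(\dot t)=(\overline D\dot t)(\overline v_x)=0,
\]
the last equality by (\ref{Newtont3}). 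Hence $\overline v(\overline v(t))=0$ identically on $M$, i.e.\ $\overline v(t)$ is constant along the integral curves of $\overline v$.

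I do not expect any genuine obstacle: the positive claim is a single line of naturality. The only point needing care --- and the likeliest source of confusion --- is keeping the two avatars of the time coordinate apart: $t$ and $\overline v(t)$ are functions on $M$, whereas $\dot t$ and $\overline D\dot t$ are functions on $TM$, and the constraint $\overline D\dot t=0$ holds on all of $TM$; it makes $\dot t$ constant along the $\overline D$-trajectories in $TM$, but once restricted via the section $\overline v$ it only controls $\overline v(t)$ along the $\overline v$-trajectories in $M$ --- which is precisely why $\overline v(t)$ need not be globally constant. It is also worth stressing that the metric (\ref{metricat}) plays no role in this argument, so the phenomenon is intrinsic to time constraints and not an artifact of the static or FLRW-type settings used in the rest of the section.
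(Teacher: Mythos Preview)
Your proposal is correct. Note, however, that the paper offers no proof of this Remark beyond what is already contained in the statement itself: the example is asserted without verification, and the positive claim is stated without argument. Your write-up therefore supplies strictly more than the paper does, and does so entirely within the paper's own framework --- the verification of the example via the explicit form of $\overline D$ and the one-line derivation of $\overline v(\overline v(t))=0$ from the characterization $\overline D_{\overline v_x}=\overline v_*\overline v_x$ (exactly the identity the paper invokes just below equation~(\ref{Intermediat1}) when computing $\overline v^*\lambda$). There is nothing to contrast: your argument is the natural elaboration of the Remark, and the care you take to distinguish $t,\overline v(t)\in\C^\infty(M)$ from $\dot t,\overline D\dot t\in\C^\infty(TM)$ is well placed.
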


 From now on in this section, we will assume that
 $\overline v(t)=1$. In particular, Equation (\ref{Intermediat3}) becomes
 \begin{equation}\label{Intermediat31}
 \overline v^\nabla\overline v+\frac{\textrm{grad}\,P}\rho-\left({\overline v(T(\overline v))+\frac{\overline v(P)}\rho}\right)\,\frac{\partial}{\partial t}=0.
 \end{equation}

 On the other hand, $\overline v$ can be  decomposed as an orthogonal sum (with respect to the metric (\ref{metricat}))
 \begin{equation}\label{Intermediat4}
 \overline v=\frac{\partial}{\partial t}+v.
 \end{equation}

 If $x^1,\dots,x^n,$ are local coordinates for $M^s$, then $t,x^1,\dots,x^n$ are local coordinates for $M=\R\times M^s$ (in the corresponding open subset); thus,
\begin{equation*}
\overline v=\frac{\partial}{\partial t}+v^i\,\frac{\partial}{\partial x^i};\quad\text{in such a way that}\quad v=v^i\,\frac{\partial}{\partial x^i}.
\end{equation*}
\bigskip

\subsubsection{\bf Static spatial metric}\label{casoestatico}
Now we will assume that the spatial metric $g^s$ does not depend on $t$, so that $g^s$ is a fixed metric on $M^s$.
\medskip

In order to interpret Equation (\ref{Intermediat31}), we will make its orthogonal decomposition (``time and space components'') in the case (\ref{Intermediat4}). Firstly, from the assumption of static $g^s$ we get
$(\partial/\partial t)^\nabla\partial/\partial t=(\partial/\partial t)^\nabla\partial/\partial x^i=0$ so that we have
\begin{equation}\label{vcovariante}
\overline v^\nabla\overline v=\frac{\partial}{\partial t}v+v^\nabla v,
\end{equation}
where $v^\nabla v$ denotes the covariant derivative with respect to the spatial metric $g^s$ ($t$ is just a parameter here) and
$$ \frac{\partial}{\partial t}v:=\frac{\partial v^i}{\partial t}\frac{\partial}{\partial x^i}.$$
In particular, we see that $\overline v^\nabla\overline v$ is purely spatial (has no component in $\partial/\partial t$).
\medskip

In this way, the spatial part of (\ref{Intermediat31}) is
\begin{equation}\label{Intermediat6}
\frac{\partial}{\partial t}v+v^\nabla v+\frac{\textrm{grad}^sP}\rho=0,
\end{equation}
where $\textrm{grad}^s$ denotes the gradient with respect to $g^s$ (coordinate $t$ being a parameter here).
Relation (\ref{Intermediat6}) is a generalized version of the classical  \emph{Euler equation} for a fluid with pressure $P$ and density $\rho$ (generalized just in the sense that we do not necessarily need  $g^s$ to be flat).

With regards to the time part of (\ref{Intermediat31}) we have
$$\frac{1}{\rho}\left(\frac{\partial P}{\partial t}-\overline v(P)\right)-\overline v(T(\overline v))=0;$$
by taking into account that $T(\overline v)=g(\overline v,\overline v)/2=1/2+g^s(v,v)/2$, the above relation can be written as
\begin{equation}\label{Intermediat5}
\frac{v(P)}\rho+v(g^s(v,v)/2)+\frac{\partial g^s(v,v)/2}{\partial t}=0.
\end{equation}
which constitutes an unsteady  version of the \emph{Bernoulli equation}. However, this is not an independent equation because can be derived from
(\ref{Intermediat6}). In fact it is enough to multiply scalarly (\ref{Intermediat6}) by $v$ (with respect to $g^s$) to obtain (\ref{Intermediat5}): for this we must bear in mind that
$$g^s(\textrm{grad}^sP,v)=v(P),\quad g^s(v^\nabla v,v)=\frac 12\,v(g^s(v,v)),\quad\text{and}\quad g^s(\partial v/\partial t,v)=\frac 12\,\frac{\partial g^s(v,v)}{\partial t}.$$
\medskip
As a consequence,
\begin{prop}\label{fluidostiempo}
The intermediate integrals $\overline v$ of a time-constrained mechanical system  (\ref{Newtont4})
for  $\alpha=dP/\rho$, with static metric $g^s$,   such that $\overline v(t)=1$, are the vector fields $\overline v=\partial/\partial t+v$  satisfying
$$ \frac{\partial}{\partial t}v+v^\nabla v+\frac{\textrm{grad}^{\,s}P}\rho=0\quad\text{{(Euler equation).}}$$
\end{prop}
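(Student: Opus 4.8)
The plan is to assemble Proposition \ref{fluidostiempo} entirely from results already established in this section, so that the ``proof'' is really a matter of bookkeeping the implications in the right order. The starting point is the characterization of intermediate integrals of the time-constrained system: by the Lemma around \eqref{NewtonIntermedia} and Equation \eqref{NewtonIntermedia2}, the intermediate integrals $\overline v$ of \eqref{Newtont4} are exactly those satisfying \eqref{Intermediat1}, namely $(\overline v^\nabla\overline v)^\bemol+\overline v^*\overline\alpha=0$ with $\overline\alpha=\alpha+\lambda\,dt$. The computation of $\overline v^*\lambda$ carried out above (using $\overline D_{\overline v_x}=\overline v_*\overline v_x$, the defining property of an intermediate integral) yields \eqref{Intermediat3}; imposing the standing hypothesis $\overline v(t)=1$ reduces this to \eqref{Intermediat31}. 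So the first step is simply to invoke this chain and record \eqref{Intermediat31} as the equation satisfied by the $\overline v$ under consideration.

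Next I would perform the orthogonal decomposition $\overline v=\partial/\partial t+v$ of \eqref{Intermediat4} and exploit the staticity of $g^s$. The key vanishing facts are $(\partial/\partial t)^\nabla(\partial/\partial t)=0$ and $(\partial/\partial t)^\nabla(\partial/\partial x^i)=0$, which hold precisely because $g^s$ does not depend on $t$ and $g=dt^2+g^s$; these give \eqref{vcovariante}, that is $\overline v^\nabla\overline v=\partial v/\partial t+v^\nabla v$, and in particular show the left-hand acceleration term is purely spatial. Then I would split \eqref{Intermediat31} into its $\partial/\partial t$-component and its $M^s$-component. The spatial component is exactly \eqref{Intermediat6}, the asserted Euler equation $\partial v/\partial t+v^\nabla v+\mathrm{grad}^s P/\rho=0$. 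The time component is \eqref{Intermediat5}; but as noted above it is not independent, being obtained from \eqref{Intermediat6} by taking the $g^s$-scalar product with $v$ and using $g^s(\mathrm{grad}^s P,v)=v(P)$, $g^s(v^\nabla v,v)=\tfrac12 v(g^s(v,v))$ and $g^s(\partial v/\partial t,v)=\tfrac12\,\partial g^s(v,v)/\partial t$. Hence the full system \eqref{Intermediat31} is equivalent to the single spatial equation \eqref{Intermediat6}, which is the content of the proposition.

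The main obstacle, such as it is, is making sure the reduction from \eqref{Intermediat31} to \eqref{Intermediat6} is genuinely an \emph{equivalence}: one direction (spatial projection) is immediate, but for the converse one must check that \eqref{Intermediat6} forces the time component \eqref{Intermediat5} to hold, so that no information is lost. That is exactly the scalar-product computation indicated above, and it goes through because $T(\overline v)=\tfrac12+\tfrac12 g^s(v,v)$ and the three identities relating $g^s$-inner products to derivatives are elementary consequences of metric compatibility of $\nabla$ for the static metric $g^s$. Everything else — the form of $\overline\alpha$, the value of $\lambda$, and $\mathrm{grad}\,t=\partial/\partial t$, which follows from \eqref{metricat} — has already been derived in the text, so the argument is a short synthesis rather than a new calculation.
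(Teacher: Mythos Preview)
Your proposal is correct and follows exactly the paper's approach: the proposition is simply a summary of the derivation already carried out in the preceding text, namely passing from \eqref{Intermediat1} through \eqref{Intermediat31}, using the static-metric identity \eqref{vcovariante} to split into the spatial Euler equation \eqref{Intermediat6} and the time component \eqref{Intermediat5}, and observing that the latter is a consequence of the former via the $g^s$-scalar product with $v$. Your explicit attention to the equivalence (that \eqref{Intermediat6} implies \eqref{Intermediat5}, so no information is lost in the projection) is a welcome clarification of a point the paper handles in passing.
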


When $g^s$ is flat, the equation above equals to Equation (2.3), p. 3 of \cite{Landau}; in fact, it is easy to obtain more general fluids if we add some force form to $\alpha$ getting Equation (2.4), p.3 \cite{Landau} or Equation (2), 3.43 of \cite{MilneThomson}.
\bigskip

\subsubsection{\bf Friedmann-Lemaître-Robertson-Walker (FLRW) type metric case}\label{casoFLRW}
We will now abandon the static hypothesis. To be able to specify, we will only consider metrics of the form
$$g=dt^2-a(t)^2\,h,$$
where $h$ is a fixed pseudo-Riemannian  metric on $M^s$ and $a(t)$ is a given non-vanishing function of $t$. The usual interpretation of $a(t)$ in FLRW cosmological models is that of a scale factor related to the radius of the Universe and the Hubble expansion (so that, for instance, increasing $a$ indicates  an expansion over the course of cosmological time $t$, etc.). We must emphasize, however, that we will not use any specific property of $h$. In particular, we do not assume that the metric $dt^2-a(t)^2h$ is a solution to Einstein's equations. Therefore, the results of this section are valid for a broader class of metrics than merely those of FLRW models.
\medskip

We can proceed as before until equation (\ref{vcovariante}) which must be modified according to the following calculations: firstly, let us take local coordinates coordinates $x^0=t$ in $\R$ and $x^1,\dots,x^n$ in $M^s$. In this way the coefficients of $g$ are:
$$g_{00}=1,\quad g_{i0}=0,\quad g_{ij}=-a^2h_{ij},\qquad i,j=1,\dots, n,$$
where $h_{ij}$ denotes the $ij$ coefficient of metric $h$. Accordingly,
$$g^{00}=1,\quad g^{i0}=0,\quad g^{ij}=-\frac{1}{a^2}h^{ij},\qquad i,j=1,\dots,n.$$
Now, an easy computation gives us the Cristoffel symbols for connection $\nabla$:
 \begin{alignat*}{3}
\Gamma_{00}^0  & = 0  &\qquad  \Gamma_{0j}^0 &  =0  &\qquad  \Gamma_{ij}^0 &=a\, \dot a\, h_{ij}\\
\Gamma_{00}^k  & = 0  &  \Gamma_{0j}^k  & =\frac{\dot a}a\,\delta_j^k  &  \Gamma_{ij}^k &={\Gamma'}_{ij}^k,
\end{alignat*}
where primed $\Gamma$ denote the symbols corresponding to metric $h$ on $M^s$, and $\dot a$ means $da/dt$.
In this way we get
\begin{alignat*}{2}
&\frac{\partial }{\partial t}^\nabla  \frac{\partial }{\partial t}  = 0
               &\qquad &\frac{\partial }{\partial x^i}^\nabla \frac{\partial }{\partial t} = \frac{\dot a}a\,\frac{\partial }{\partial x^i}
                 \\[10pt]
&\frac{\partial }{\partial t}^\nabla \frac{\partial }{\partial x^j}  = \frac{\dot a}a\, \frac{\partial }{\partial x^j}
             & &\frac{\partial }{\partial x^i}^\nabla \frac{\partial }{\partial x^j} = a\,\dot a\,h_{ij}\frac{\partial }{\partial t}+{\Gamma'}_{ij}^k \frac{\partial }{\partial x^k}
\end{alignat*}
As a consequence,
\begin{align*}
&\frac{\partial }{\partial t}^\nabla v =\frac{\partial }{\partial t}v+\frac{\dot a}a\,v\\[10pt]
&v^\nabla\frac{\partial }{\partial t} =\frac{\dot a}a\,v\\[10pt]
&v^\nabla v =v^{\nabla'}v+a\,\dot a\, h(v,v)\,\frac{\partial}{\partial t}
\end{align*}
where $\nabla'$ is the covariant derivative associated with $h$ in $M^s$  ($v$ is managed as a vector field in $M^s$ depending on the parameter $t$).

Therefore, equation (\ref{vcovariante}) must be replaced by
\begin{equation}\label{vcovariante2}
\overline v^\nabla\overline v=\frac{\partial}{\partial t}v+v^{\nabla'} v+2\,\frac{\dot a}a\,v+a\,\dot a\, h(v,v)\,\frac{\partial}{\partial t}
\end{equation}

Now, we must to insert (\ref{vcovariante2}) into (\ref{Intermediat31}), thus getting
\begin{equation}\label{FLRW1}
\frac{\partial}{\partial t}v+v^{\nabla'} v+2\,\frac{\dot a}a\,v+a\,\dot a\, h(v,v)\frac{\partial}{\partial t}
    +\frac{\textrm{grad}\,P}\rho-\left({\overline v(T(\overline v))+\frac{\overline v(P)}\rho}\right)\,\frac{\partial}{\partial t}=0.
\end{equation}
Taking into account that
\begin{equation*}
\begin{cases}
\displaystyle{\textrm{grad}}P=\frac{\partial P}{\partial t}\,\frac{\partial}{\partial t}-\frac 1a\,\textrm{grad}^hP,\quad\text{($\textrm{grad}^h$ means gradient with respect to $h$)}\\[10pt]
\displaystyle{\overline v(P)=\frac{\partial P}{\partial t}\,\frac{\partial}{\partial t}+v(P)}\\[10pt]
\displaystyle{T\overline v=\frac {(1-a^2h(v,v))}2,\quad\text{so that}\quad}\\
{\hskip 4cm\displaystyle\overline v(T\overline v)=-\frac 12(2a\dot a\,h(v,v)+a^2\,\frac{\partial h(v,v)}{\partial t}+v(a^2h(v,v)))},
\end{cases}
\end{equation*}
Equation (\ref{FLRW1}) becomes
\begin{multline*}
\frac{\partial}{\partial t}v+v^{\nabla'} v+2\,\frac{\dot a}a\,v+a\,\dot a\, h(v,v)\,\frac{\partial}{\partial t}
    -\frac 1{a^2}\,\frac{\textrm{grad}^h\,P}\rho\\-
    \left({-a\,\dot a\,h(v,v)-\frac 12\,a^2\,\frac{\partial h(v,v)}{\partial t}-\frac 12 a^2\,v(h(v,v))
    +\frac{ v(P)}\rho}\right)\,\frac{\partial}{\partial t}=0.
\end{multline*}
Simplifying the previous expression, we obtain
 \begin{multline}\label{FLRW2}
 \frac{\partial}{\partial t}v+v^{\nabla'} v+2\,\frac{\dot a}a\,v+2a\dot a\, h(v,v)\,\frac{\partial}{\partial t}
    -\frac 1{a^2}\,\frac{\textrm{grad}^h\,P}\rho\\+
    \left({\frac 12\,a^2\,\frac{\partial h(v,v)}{\partial t}+\frac 12\,v(h(v,v))
    -\frac{ v(P)}\rho}\right)\,\frac{\partial}{\partial t}=0.
  \end{multline}

 The last step consists of taking the time and spatial components of the above equation: the first one gives a Bernouilli equation and the second one gives an Euler equation:

 \begin{prop}\label{fluidosFLRW}
The intermediate integrals $\overline v$ of a time-constrained mechanical system  (\ref{Newtont4}) 
for  $\alpha=dP/\rho$, with FLRW-type metric $g=dt^2-a(t)^2h$ on $M=\R\times M^s$, such that $\overline v(t)=1$, are the vector fields $\overline v=\partial/\partial t+v$, where $v$ is a tangent field on $M^s$,  satisfying
\begin{equation*}
\frac{\partial}{\partial t}v+v^{\nabla'} v+2\,\frac{\dot a}a\,v-\frac 1{a^2}\,\frac{\textrm{grad}^{\,h}\,P}\rho=0,\quad \text{(Euler equation)}.
\end{equation*}
\end{prop}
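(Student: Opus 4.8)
The plan is to obtain the statement as the $g$-orthogonal decomposition, with respect to the metric $g=dt^2-a(t)^2h$, of Equation (\ref{FLRW2}), which has already been assembled above; all the genuine computation (Christoffel symbols of $g$, the identity (\ref{vcovariante2}) for $\overline v^\nabla\overline v$, and the specializations of $\textrm{grad}\,P$, $\overline v(P)$ and $T\overline v$) has been carried out, so what is left is the bookkeeping of the final split. Concretely: by the characterization of intermediate integrals of Section \ref{s:intermedias} (equations (\ref{NewtonIntermedia2})--(\ref{NewtonIntermedia4})) applied to the time-constrained system (\ref{Newtont4}), a vector field $\overline v$ on $M=\R\times M^s$ is an intermediate integral of (\ref{Newtont4}) exactly when $(\overline v^\nabla\overline v)^\bemol+\overline v^*\overline\alpha=0$ with $\overline\alpha=dP/\rho+\lambda\,dt$; using the identity $\overline v^*\{(dT+\alpha)(\overline D)\}=(dT(\overline v)+\overline v^*\alpha)(\overline v)$ and the hypothesis $\overline v(t)=1$, this becomes (\ref{Intermediat31}), and inserting (\ref{vcovariante2}) together with the specializations of $\textrm{grad}\,P$, $\overline v(P)$, $T\overline v$ displayed above turns it into (\ref{FLRW2}). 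Note that $\overline v(t)=1$ already forces $\overline v=\partial/\partial t+v$ with $v=v^i\,\partial/\partial x^i$ a ($t$-dependent) vector field on $M^s$, since $g(\overline v,\partial/\partial t)=\overline v(t)=1=g(\partial/\partial t,\partial/\partial t)$, so the $g$-orthogonal projection of $\overline v$ onto $\R\,\partial/\partial t$ is $\partial/\partial t$ itself.

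The remaining step is to split (\ref{FLRW2}) along the $g$-orthogonal direct sum $TM=\R\,\partial/\partial t\oplus TM^s$. Among its terms, those tangent to $M^s$ are $\partial_t v$, $v^{\nabla'}v$, $2(\dot a/a)\,v$ and $-a^{-2}(\textrm{grad}^hP)/\rho$, while the rest are multiples of $\partial/\partial t$; hence (\ref{FLRW2}) is equivalent to the vanishing of each of the two components. The $TM^s$-component is precisely
\[
\frac{\partial}{\partial t}v+v^{\nabla'} v+2\,\frac{\dot a}a\,v-\frac 1{a^2}\,\frac{\textrm{grad}^{\,h}\,P}\rho=0,
\]
the claimed Euler equation, and the $\partial/\partial t$-component is an unsteady Bernoulli-type relation. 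For the converse — that every $\overline v=\partial/\partial t+v$ with $v$ solving this Euler equation is an intermediate integral of (\ref{Newtont4}) — I would argue as in the static case of Proposition \ref{fluidostiempo}: the Bernoulli component is not independent, being recovered from the Euler equation by contracting it with $v$ with respect to $h$ (using $h(v^{\nabla'}v,v)=\tfrac12 v(h(v,v))$, $h(\partial_t v,v)=\tfrac12\partial_t h(v,v)$ and $h(\textrm{grad}^hP,v)=v(P)$) and rescaling by $a^2$. Thus (\ref{FLRW2}) is equivalent to the Euler equation alone, which is the assertion.

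The only delicate part is the careful accounting of the many terms in (\ref{FLRW1})--(\ref{FLRW2}): one has to be sure which of them carry a factor $\partial/\partial t$ and which are purely spatial, and — for the converse — that the powers of the scale factor $a$ match, so that the contracted spatial equation reproduces the $\partial/\partial t$-component exactly (this factor $a^2$ is the new feature relative to the static subcase and must cancel correctly). Beyond this there is no conceptual obstacle, since (\ref{vcovariante2}) and the specializations of $\textrm{grad}\,P$, $\overline v(P)$, $T\overline v$ already supplied the substance.
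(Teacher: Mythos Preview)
Your proposal is correct and follows exactly the paper's own proof: split (\ref{FLRW2}) into its spatial and $\partial/\partial t$ components, read off the Euler equation from the spatial part, and then verify that the temporal (Bernoulli) component is recovered by $h$-contracting the Euler equation with $v$ (after a global factor of $a^2$), just as in the static case. Your caution about tracking powers of $a$ is well placed---indeed the printed (\ref{FLRW2}) has a missing $a^2$ on the $\tfrac12\,v(h(v,v))$ term, but the Bernoulli identity displayed in the paper's proof uses the corrected coefficient, so the argument goes through as you outline.
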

\begin{proof}
Equation in the statement is the spatial component of (\ref{FLRW2}). The time component is
$$2\,\frac{\dot a}a\, h(v,v)\,+\frac 12\,\,\frac{\partial h(v,v)}{\partial t}+\frac 12\,v(h(v,v))
  -\frac{ v(P)}{a^2\,\rho}=0$$
  (a Bernoulli equation). Nevertheless, this is a consequence of Euler's equation as seen if we multiply it by $v$ with respect to the metric $h$, similar to what was done in the static case.
\end{proof}

Equation in the above proposition is a general version (for vanishing potential) of the Euler equation for non relativistic fluids in a FLRW type background metric (compare, for instance, with Equation (14) in \cite{Bertschinger} where $\partial/\partial\tau$, $\vec{v}$ and $p$ correspond with our $a\partial/\partial t$, $av$ and $-P$, respectively).

\bigskip

\section{Relativistic fluids as intermediate integrals}\label{s:fluidorelativista}

Now the point of departure is the relativistically constrained mechanical system (\ref{NewtonR})
\begin{equation*}
\iiota_D\omega+dT+\alpha-\frac{\dot\alpha}{\dot\theta}\,\theta=0\ .
\end{equation*}
in the case $\alpha=dP/\zeta$ (the reason for changing $\rho$ for $\zeta$ is because now interpretation is slightly different).

Hence, we have
\begin{equation}\label{IntermediaR}
\iiota_D\omega+dT+\frac{dP}\zeta-\frac{\dot P}{\zeta\,\dot\theta}\,\theta=0.
\end{equation}

In order to apply (\ref{NewtonIntermedia2}) for an intermediate integral $v$ of (\ref{IntermediaR}), we need to compute $v^*\widehat\alpha$ when
$$\widehat\alpha:=\frac{dP}\zeta-\frac{\dot P}{\zeta\,\dot\theta}\,\theta.$$

On the one hand, $v^*dP=dP$, $v^*\zeta=\zeta$ (as forms on $M$); next, $v^*\dot P=v(P)$ by definition of $\dot P$; finally, $v^*\theta=v^\bemol$ and, since $\dot \theta=2T$ (Definition \ref{cinetica}) we get $v^*\dot\theta=2T(v)=g(v,v)$. Putting these values in the above relationship we arrive to:
$$v^*\widehat\alpha=\frac{dP}\zeta-\frac{v(P)}{\zeta\, g(v,v)}\,v^\bemol.$$

In order to obtain convenient expressions for the equations of the intermediate integrals, we will assume that $v$ is \emph{time-like} in the sense of being
$$g(v,v)=\|v\|^2>0;$$
It should be clear that the opposite case, $g(v,v)<0$, can be handled in the same terms, although the final expressions would differ in some signs.

Let us denote $\beta:=\|v\|$ which is a first integral of $v$, because, in relativistic systems (Definition \ref{DefRelativista}), kinetic energy is constant along the solutions. In addition, $v=\beta u$ for a suitable unitary time-like vector field $u$. With this notation, the equation (\ref{NewtonIntermedia2}) (or (\ref{NewtonIntermedia})) for the intermediate integrals of  (\ref{IntermediaR}) is
\begin{equation}
\beta^2 (u^\nabla u)^\bemol+\frac{dP}\zeta-\frac{u(P)}{\zeta}\,u^\bemol=0.
\end{equation}

 Or, in terms of vector fields,
 \begin{equation}
\beta^2 u^\nabla u+\frac{\textrm{grad}\,P}\zeta-\frac{u(P)}{\zeta}\,u=0.
\end{equation}

As a consequence, for $\beta=1$ and putting $\mu:=\zeta-P$ we arrive to the
\begin{prop}
The Euler equations for a relativistic perfect fluid with pressure $P$ and energy density $\mu$,
$$(\mu+P)u^\nabla u+{\textrm{grad}\,P}-{u(P)}\,u=0$$
are the equations of the unitary intermediate integrals $u$  of a mechanical system with force form $\alpha=dP/(\mu+P)$ once it is relativistically corrected.
\end{prop}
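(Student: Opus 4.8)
The plan is to start from the relativistically corrected Newton equation (\ref{IntermediaR}) and unwind the characterization of its intermediate integrals via Lemma (\ref{NewtonIntermedia}) and its vorticity form (\ref{NewtonIntermedia2}). Concretely, I would set $\widehat\alpha:=\alpha-\dot\alpha\,\theta/\dot\theta$ with $\alpha=dP/\zeta$, so that $\widehat\alpha=dP/\zeta-(\dot P/(\zeta\dot\theta))\,\theta$, and compute the pull-back $v^*\widehat\alpha$ along a section $v\colon M\to TM$. The key observation is that pulling back a horizontal form that is the pull-back of something on $M$ (like $dP$ and $\zeta$) leaves it unchanged, while the ``dotted'' ingredients specialize as $v^*\dot P=v(P)$, $v^*\theta=v^\bemol$, and $v^*\dot\theta=v^*(2T)=2T(v)=g(v,v)$. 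Substituting into (\ref{NewtonIntermedia2}) gives $(v^\nabla v)^\bemol+dP/\zeta-(v(P)/(\zeta\,g(v,v)))\,v^\bemol=0$.

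The second step is the normalization. Since the corrected system is relativistic by Theorem \ref{criterio} (because $\dot{\widehat\alpha}=0$ by construction, as already checked in the relativistic correction subsection), the kinetic energy $T$ is constant along every solution of the corresponding $D$; hence, for an intermediate integral $v$, the function $T(v)=\tfrac12 g(v,v)$ is a first integral of $v$. Under the time-like assumption $g(v,v)=\|v\|^2>0$, write $\beta:=\|v\|$ (a first integral) and $v=\beta u$ with $u$ unitary time-like. Then I would use the Leibniz rule for the covariant derivative, $v^\nabla v=(\beta u)^\nabla(\beta u)=\beta(u\beta)u+\beta^2\,u^\nabla u$, and note that $u\beta=0$ since $\beta$ is constant along the flow of $v$ (equivalently of $u$); so $v^\nabla v=\beta^2\,u^\nabla u$. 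Also $v(P)=\beta\,u(P)$ and $v^\bemol=\beta\,u^\bemol$, and $g(v,v)=\beta^2$, so the $v(P)/(\zeta g(v,v))\cdot v^\bemol$ term collapses to $(u(P)/\zeta)\,u^\bemol$. This yields $\beta^2(u^\nabla u)^\bemol+dP/\zeta-(u(P)/\zeta)u^\bemol=0$, and in vector-field form $\beta^2\,u^\nabla u+\textrm{grad}\,P/\zeta-(u(P)/\zeta)\,u=0$.

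Finally, I would specialize to $\beta=1$ and multiply through by $\zeta$, then substitute $\zeta=\mu+P$ (equivalently $\mu:=\zeta-P$) to obtain $(\mu+P)\,u^\nabla u+\textrm{grad}\,P-u(P)\,u=0$, which is exactly the relativistic Euler equation for a perfect fluid with pressure $P$ and energy density $\mu$. To close the loop I would remark that $u$ is unitary and time-like throughout, so this is genuinely a one-to-one matching between unitary time-like intermediate integrals of the relativistically corrected mechanical system $(M,g,dP/(\mu+P))$ and solutions of relativistic Euler; the case $g(u,u)<0$ is identical up to signs.

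The main obstacle — really the only subtle point — is justifying that $\beta=\|v\|$ is a first integral of $v$ and hence annihilated by $u$, which is what lets $v^\nabla v$ factor cleanly as $\beta^2 u^\nabla u$ rather than carrying an extra $\beta(u\beta)u$ term; this rests precisely on the fact that the corrected system is relativistic in the sense of Definition \ref{DefRelativista}, so that $DT=0$ and the kinetic energy is constant along solutions, and therefore along any intermediate integral's trajectories since $D_{v_x}=v_*v_x$. Everything else is a bookkeeping exercise in pulling back horizontal forms and specializing $\dot\theta$ to $g(v,v)$.
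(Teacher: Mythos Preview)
Your proof is correct and follows essentially the same route as the paper: compute $v^*\widehat\alpha$ via the identifications $v^*\dot P=v(P)$, $v^*\theta=v^\bemol$, $v^*\dot\theta=g(v,v)$, then write $v=\beta u$ with $\beta=\|v\|$ a first integral (since the corrected system is relativistic), reduce $v^\nabla v$ to $\beta^2 u^\nabla u$, and finally set $\beta=1$, $\zeta=\mu+P$. If anything, you make explicit the Leibniz-rule step $v^\nabla v=\beta(u\beta)u+\beta^2 u^\nabla u$ and the reason $u\beta=0$, which the paper leaves implicit.
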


 As in previous sections, we use the  adjective relativistic in a broader sense than usual: we admit any type of pseudo-Riemannian metric, in any dimension and with arbitrary signature.
 \bigskip

\section{Conclusions}\label{s:conclusiones}
We have shown that Euler's equations for perfect fluids are the same as those of intermediate integrals of (finite dimensional) mechanical systems, say a material point particle; in each case, the appropriate natural modifications must be made, be they time constraints or relativistic corrections. In this geometric way, a single mathematical object is displayed giving rise to the basic equations of hydrodynamics in its classical (stationary or not) and relativistic versions, for arbitrary pseudo-Riemannian metrics.

It is remarkable that the structure of a finite dimensional mechanical system contains, in itself, that of perfect fluids which are continuum mechanical systems.
More specifically, ensembles of virtual motions of a finite mechanical system perfectly fits with the description of fluid dynamics.
In the case of a relativistic fluid, something else should be highlighted: an approach as a limit case from n-particle systems is problematic, but, as we have seen, the consideration of intermediate integrals (virtual particle beams) of a single particle does leads smoothly and directly to the equations of fluid dynamics.
\bigskip

\begin{acknowledgement*}
The framework in which this article is presented, as well as most of the ideas that have inspired it, are due to my dear friend and thesis advisor, Professor Jesús Muñoz-Díaz, who shared them with me within the scientific seminar he leads in the Mathematics Department of the University of Salamanca. I want to express to him here my deep and sincere gratitude for all that and for so many delightful moments trying to understand a little bit more.
\end{acknowledgement*}

\bigskip

\end{document}